\documentclass[12pt]{article}
\textwidth=165mm \headheight=0mm \headsep=10mm \topmargin=0mm
\textheight=220mm 
\oddsidemargin=0mm

\usepackage{natbib,rotating,subcaption}

\usepackage{amsthm,amsmath,natbib,algorithm,algpseudocode,pifont,lscape,varwidth}

\RequirePackage[colorlinks,citecolor=blue,urlcolor=blue]{hyperref}
\usepackage{mathtools}
\usepackage{amssymb}
\usepackage[T1]{fontenc}
\usepackage[utf8x]{inputenc}
\usepackage{commath,colonequals}
\usepackage{enumitem,graphicx}
\usepackage{multirow,xcolor}
\usepackage[normalem]{ulem}

\newcommand{\sampcov}{\mathbf{S}}

\newcommand{\vecx}{\mathbf{x}}

\newcommand{\vecX}{\mathbf{X}}

\newcommand{\vecQ}{\mathbf{Q}}

\newcommand{\vecw}{\mathbf{w}}
\newcommand{\vecz}{\mathbf{z}}

\newcommand{\vecmu}{\boldsymbol\mu}

\newcommand{\vecLambda}{\mathbf\Lambda}

\newcommand{\vecSigma}{\mathbf\Sigma}

\newcommand{\vectheta}{\boldsymbol\theta}

\newcommand{\vecvartheta}{\boldsymbol\vartheta}

\newcommand{\argmax}{\arg\max}
\DeclareUnicodeCharacter{87}{\u{o}}

\hyphenation{McNicholas}
\newtheorem{assumption}{Assumption}
\newtheorem{lemma}{Lemma}
\newtheorem{theorem}{Theorem}
\DeclarePairedDelimiter\abss{\lvert}{\rvert}

\theoremstyle{rmk}
\newtheorem{rmk}{Remark}
\newtheorem{definition}{Definition}

\title{Finding outliers in Gaussian model-based clustering}
\author{Katharine M. Clark and Paul D. McNicholas}
\date{\small Department of Mathematics \& Statistics, McMaster University, Ontario, Canada.}

\pdfminorversion=6

\begin{document}
\maketitle

\begin{abstract}
Clustering, or unsupervised classification, is a task often plagued by outliers. Yet there is a paucity of work on handling outliers in clustering. Outlier identification algorithms tend to fall into three broad categories: outlier inclusion, outlier trimming, and \textit{post hoc} outlier identification methods, with the former two often requiring pre-specification of the number of outliers. The fact that sample squared Mahalanobis distance is beta-distributed is used to derive an approximate distribution for the log-likelihoods of subset finite Gaussian mixture models. An algorithm is then proposed that removes the least plausible points according to the subset log-likelihoods, which are deemed outliers, until the subset log-likelihoods adhere to the reference distribution. This results in a trimming method, called OCLUST, that inherently estimates the number of outliers.\\[-10pt]

\noindent\textbf{Keywords}: Clustering; model selection; mixture models; OCLUST; outlier.
\end{abstract}

\section{Introduction}
Classification aims to partition data into a number of groups, or classes, such that observations in the same group are in some sense similar to one another and different to observations in other groups. Clustering is unsupervised classification, in that none of the group memberships are known \textit{a priori}. Many clustering algorithms originate from one of three major methods: hierarchical clustering, $k$-means/medoids clustering, and mixture model-based clustering. Additional popular methods include spectral and density-based clustering. Hierarchical clustering either iteratively merges (agglomerative) or splits (divisive) clusters, while $k$-means/medoids aims to minimize distance from each point to its cluster centre. Spectral clustering is based in graph theory, where a similarity matrix is used to map nearby points into a lower dimension. The mapped points are then clustered by distance with traditional algorithms, e.g., $k$-means. Density-based methods, such as DBSCAN \citep{ester96}, cluster based on number of points within a certain proximity to other points. Although distance-based clustering remains popular, the mixture modelling approach has become increasingly prevalent due to its robustness and mathematical interpretability \citep{mcnicholas16a,mcnicholas16b}. Typically, in the mixture modelling framework for clustering, each component corresponds to a unique cluster, and a cluster is a sample of points from the component distribution. Although the model can employ almost any component distribution, Gaussian components have been popular due their tractability. Most mixture model-based clustering methods assume, either explicitly or implicitly, that the data --- and the clusters --- are free of outliers.

Mixture model-based clustering involves maximizing the likelihood of the mixture model. The density of a $G$-component Gaussian mixture model is a convex linear combination of  component densities and is given by
\begin{equation}
	f(\vecx\mid\vecvartheta)=\sum_{g=1}^{G}\pi_g \phi(\vecx\mid\vecmu_g,\vecSigma_g),
	\label{eq:generalmixture1}
\end{equation}
where
\begin{equation*}
	\phi(\vecx\mid\vecmu_g,\vecSigma_g)=\frac{1}{\sqrt{(2\pi)^p|\vecSigma_g|}}\text{exp}\left\{-\frac{1}{2}(\vecx-\vecmu_g)'\vecSigma_g^{-1}(\vecx-\vecmu_g)\right\}
\end{equation*}
is the density of a $p$-dimensional random variable $\vecX$ from a Gaussian distribution with mean $\vecmu_g$ and covariance matrix $\vecSigma_g$, $\pi_g>0$ is the mixing proportion such that $\sum_{g=1}^G \pi_g =1$, and $\vecvartheta=\{\pi_1, \dots, \pi_G, \vecmu_1, \dots, \vecmu_G, \vecSigma_1, \dots, \vecSigma_G\}$ denotes the parameters.

An outlier can be considered an observation ``that appears to deviate markedly from other members of the sample in which it occurs'' \citep{grubbs69}. The two main types of outliers are mild outliers and gross outliers \citep[][pp.~79--80]{ritter14}. Mild outliers may be near other points in a cluster but they are unusual relative to the distribution of the cluster. Gross outliers are unpredictable and are not modeled by any probability distribution. They do not exist in close proximity to any of the clusters. Outliers may occur due to (unlikely) random chance, or they may arise due to experimental, measurement, or recording error \citep{grubbs69}.

A noisy dataset was simulated by generating three Gaussian clusters and adding uniform noise, and is shown on the left-hand side of Figure~\ref{fig:toy}. The uniform noise can be thought of as representing mostly gross or mild outliers; however, some of the uniform noise overlaps with a cluster and so would be classified as belonging to the cluster by any sensible method. Applying the \texttt{mixture} \citep{pocuca21} package for \textsf{R} \citep{R21} to these data, with $G=3$ components, leads to the result depicted on the right-hand side of Figure~\ref{fig:toy}. This solution merges two of the clusters  and places some outliers into their own cluster. Therefore, the result of failing to account for outliers is an ill-fitting model which misrepresents the structure of the data.
\begin{figure}[!htb]
	\centering
	\includegraphics[width=6in]{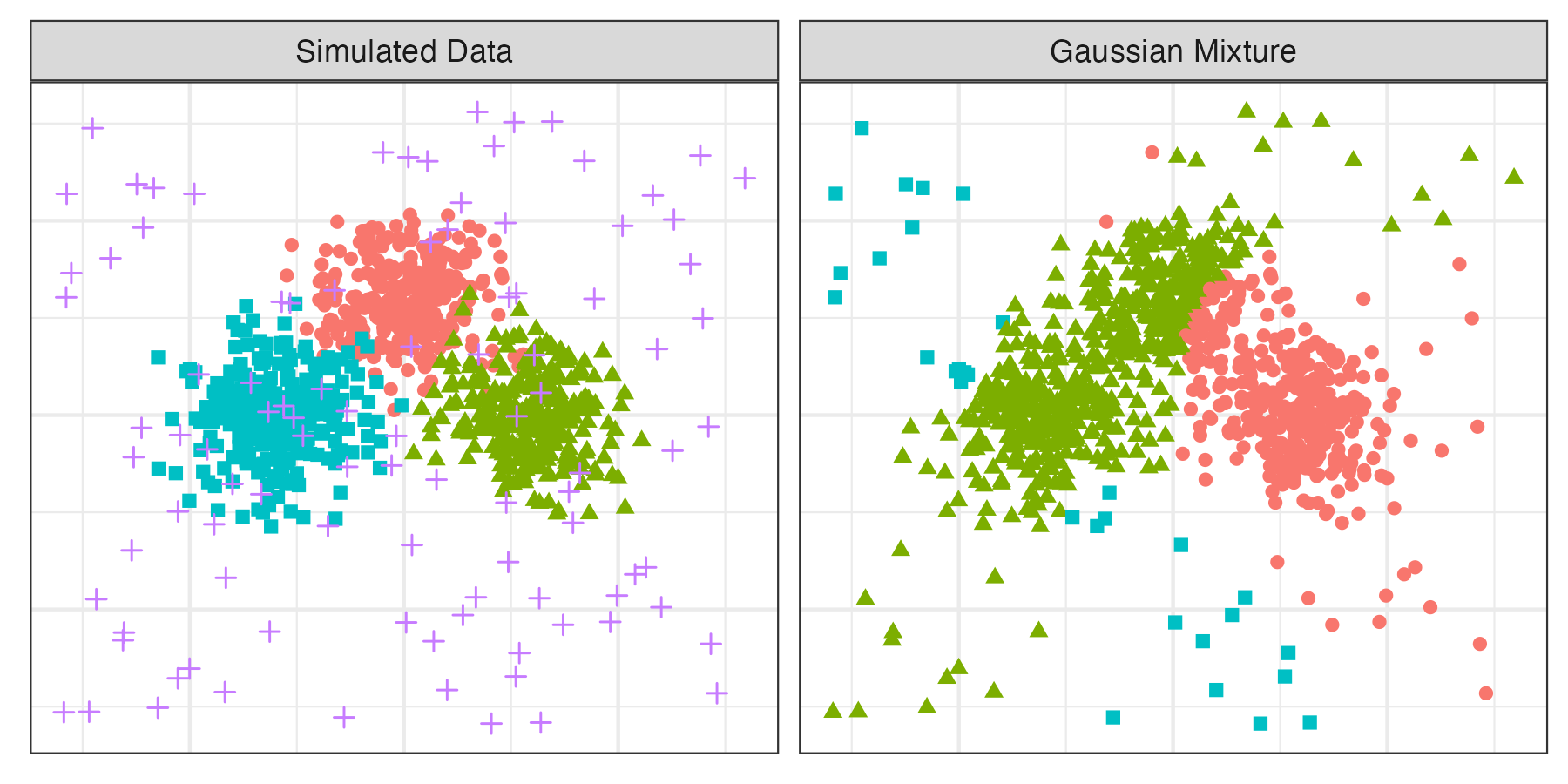}
	\caption{A simulated dataset with three Gaussian clusters and uniform noise. Colours represent the simulated classes (left) and predicted classes using the \texttt{mixture} package (right).}
	\label{fig:toy}
\end{figure}

In general, outliers, particularly gross ones, can significantly impact the parameter estimates. For example, when fitting a mixture model, outliers may draw component means towards them and inflate component covariance matrices. In the case of our simulated example, the originally-spherical clusters are elongated (Figure~\ref{fig:toy}). It is thus beneficial to either remove outliers or reduce their effect by accounting for them. The proposed OCLUST algorithm is designed to handle both mild and gross outliers.

\section{Related Work}
\subsection{Outliers in Unsupervised Gaussian Model-Based Clustering} 
In Gaussian model-based clustering, outlier methods usually fall into one of three paradigms: outlier inclusion, \textit{post hoc} outlier identification, and outlier trimming. The first method, proposed by \cite{banfield93}, includes outliers in an additional uniform component over the convex hull. If outliers can be treated as cluster-specific and are symmetric about the component means, we can incorporate them into the tails if we cluster using mixtures of multivariate power exponential \citep{dang15}, Pearson type VII \citep{sun10}, multivariate leptokurtic-normal \citep{bagnato17}, multivariate shifted exponential normal, multivariate tail-inflated normal \citep{tomarchio22}, or multivariate t-distributions \citep{peel00,andrews11a}. \cite{punzo16b} introduce mixtures of contaminated Gaussian distributions, where each cluster has a proportion $\alpha_g\in(0.5,1)$ of `good' points, with density $\phi(\vecx\mid\vecmu_g,\vecSigma_g)$, and a proportion $1-\alpha_g$ of `bad' points, with density $\phi(\vecx\mid\vecmu_g,\eta_g\vecSigma_g)$. Within component~$g$, each distribution has the same mean $\vecmu_g$, but the `bad' points have an inflated covariance matrix $\eta_g\vecSigma_g$, where $\eta_g >1$.

Instead of fitting outliers in the model, we may wish just to identify outliers after the model has been fit. This is of particular interest in novelty detection. \cite{evans15} identify outliers first by fitting a Gaussian mixture model, then quantifying the extent the variance changes when a point is removed.  Points that inflate the variance beyond a certain threshold if included are deemed outliers.

Finally, it may be of interest to trim outliers from the dataset. \cite{cuesta97} developed an impartial trimming approach for $k$-means clustering; however, this method maintains the drawback of $k$-means clustering, where the clusters are spherical with equal --- or, in practice, similar --- radii or they are so well separated that a departure from this shape constraint will not matter. Of course, the latter is a trivial case and is mentioned only for completeness. \cite{garcia08} improved upon trimmed $k$-means with the TCLUST algorithm. TCLUST places a restriction on the eigenvalue ratio of the covariance matrix, as well as implementing a weight on the clusters, allowing for clusters of various elliptical shapes and sizes. \cite{garcia11} extends TCLUST to Gaussian mixture modelling. An obvious challenge with these methods is that the eigenvalue ratio must also be known \textit{a~priori}. There exists an estimation scheme for the proportion of outliers but it is heavily influenced by the choices for number of clusters and eigenvalue ratio. It is of great interest to develop a trimming approach within the model-based clustering domain that does not require pre-specification of the proportion of outliers. This is important because, for most real datasets, the proportion of outliers is \textit{a~priori} unknown.

\subsection{General Unsupervised Outlier Detection Methods}
Some traditional approaches to outlier analysis identify points with outlier scores greater than a certain threshold. For example, we can generate a score for each data point using a measure of central tendency $a$ and a measure of variability $b$. Points are considered outliers if their scores are larger than some threshold $c$, i.e., $\abs{\frac{x-a}{b}}>c$ or $x \notin (a-bc,a+bc)$. Examples include the three sigma rule ($c=3$) and the median absolute difference thresholding strategy \citep{hampel74}.  \cite{yang19} make this method robust by applying the thresholding procedure twice, while \cite{buzzi11} remove outliers one-by-one by iteratively applying a clever mean threshold. 

In the clustering framework, similar outlier scores can be generated using the distance to the cluster centre \citep{hautamaki05} or to the nearest neighbours \citep{ramaswamy00}. We can also consider points of low density to be outliers, for example in \cite{knorr98}, where points with few neighbours within a specified distance are identified, or DBSCAN, which clusters data based on density, with points not placed in clusters classified as noise. \cite{franti18a} smooth noisy data by replacing each point by the mean of its nearest neighbours. They repeat this process several times to obtain better centroid locations and $k$-means classification.  \cite{yang21} take this one step further by treating the distance each point is shifted as an outlier score. 

\cite{boukerche20} provide a detailed survey of methods to address outliers in unsupervised classification. They include approaches which are proximity- or projection-based, and techniques for high-dimensional, streaming, or `big’ data. While these methods could be used to identify observations that deviate from the entire dataset, some cannot consider points which are spurious in relation to the cluster structure. Although \cite{boukerche20} include cluster-based approaches, the methods they consider typically concern groups of outliers which form their own cluster.

\section{Methodology}
\subsection{Distribution of Subset Log-Likelihoods}\label{sec:dist}
In this section, the distribution of squared Mahalanobis distance is used to derive the distribution of subset log-likelihoods, which forms the basis for the proposed OCLUST algorithm. We consider a subset log-likelihood to be  the log-likelihood of a model fitted with $n-1$ of the data points. Formally, if we denote our complete dataset as $\mathcal{X}=\{\vecx_1,\dots,\vecx_n\}$, then we can define the $j${th} subset as the complete dataset with the $j$th point removed, $\mathcal{X}\setminus \vecx_j=\{\vecx_1,\dots,\vecx_{j-1},\vecx_{j+1},\dots, \vecx_n\}$. There are $n$ such subsets. 

Consider a dataset $\mathcal{X}=\{\vecx_1,\dots,\vecx_n\}$ in $p$-dimensional Euclidian space $\mathbb{R}^p$, where each  $\vecx_i \in \mathcal{X}$ has Gaussian mixture model density $f(\vecx_i \mid \vecvartheta)$ as in \eqref{eq:generalmixture1}. The log-likelihood of i.i.d. dataset $\mathcal{X}$ under the Gaussian mixture model is
\begin{equation}
	\ell_{\mathcal{X}}(\vecvartheta)=\sum_{i=1}^n \log \left[\sum_{g=1}^G \pi_g \phi(\vecx_i\mid \vecmu_g, \vecSigma_g)\right],
	\label{eq:loglik}
\end{equation}
which we will denote as $\ell_{\mathcal{X}}$ for simplicity. From \eqref{eq:loglik}, the log-likelihood of the model can be expressed as the sum of the logarithm of each point's probability density. Points near the tails --- corresponding to probability density values close to zero --- will have a smaller contribution to the log-likelihood than points with higher probability density such as points near the centre of the distribution. Intuitively, the larger the difference in log-likelihood between the subset and full models, the more outlying the point is, i.e., the model improved the most in its absence.

By this logic, we treat the point whose absence produced the largest subset log-likelihood, $\vecx_k$ say, as our candidate outlier. Formally, we can express this as follows.
\begin{definition}[Candidate Outlier] We define our candidate outlier as $\vecx_k$, where
	\begin{equation*}
		k=\argmax_{j \in [1,n]} \ell_{\mathcal{X} \setminus \vecx_j}, 
		\label{eq:argmax}
	\end{equation*}
	and $\ell_{\mathcal{X} \setminus \vecx_j}$ is the log-likelihood of the subset with the point $\vecx_j$ removed.
	\label{def:out}
\end{definition}
We continue removing candidate outliers until we obtain our best model, which is determined by the distribution of our subset log-likelihoods, derived below. 
Note that a random variable $W$ from a beta distribution has density 
\begin{equation}\label{eq:beta}
	f_{\text{beta}}(w | \alpha, \beta)=\frac{\Gamma(\alpha+\beta)}{\Gamma(\alpha)\Gamma(\beta)} w^{\alpha-1}(1-w)^{\beta-1},
\end{equation}
where  $0<w<1, \alpha>0, \beta>0$.

The derivation requires the complete-data log-likelihood as an approximation of the log-likelihood. Herein, the complete-data comprise the data $\vecx_1,\dots,\vecx_n$ and their cluster memberships $\vecz_1,\ldots,\vecz_n$, where $\vecz_i=(z_{i1},\ldots,z_{iG})'$, $z_{ig}=1$ if $\vecx_i$ belongs to the $g$th cluster, and $z_{ig}=0$ otherwise.
 In \eqref{eq:loglik}, each cluster contributes to the density of every point. Instead, consider the case where the clusters are well-separated. We can use the complete-data log-likelihood, herein denoted $l_{\mathcal{X}}$, which considers only the density of the component to which the point belongs. We formalize this result in Lemma~\ref{lem:loglik}, but first we must make the following assumption.
\begin{assumption}
The clusters are non-overlapping and well separated.
\label{ass:sep}
\end{assumption}
\noindent In practice, Assumption~\ref{ass:sep} may be relaxed. For more information on the effect of cluster separation, see~\appref{app:overlap}.

\begin{lemma}\label{lem:loglik}
	As the separation between the clusters increases, $\ell_\mathcal{X}\rightarrow l_\mathcal{X}$. In other words, the log-likelihood in \eqref{eq:loglik} converges to $l_\mathcal{X}$, where
	\begin{equation}
		l_{\mathcal{X}}=\sum_{i=1}^n \sum_{g=1}^G z_{ig}\left[\log\pi_g + \log\phi(\vecx_i\mid \vecmu_g, \vecSigma_g)\right].
		\label{eq:logass}
	\end{equation}
\end{lemma}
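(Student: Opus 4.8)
The plan is to show that each term $\log\!\left[\sum_{g=1}^G \pi_g \phi(\vecx_i\mid\vecmu_g,\vecSigma_g)\right]$ in \eqref{eq:loglik} is asymptotically dominated by the single component to which $\vecx_i$ belongs, so that summing over $i$ recovers $Q_{\mathcal{X}}$. Fix an observation $\vecx_i$ with $\vecx_i\in\mathcal{C}_h$, i.e.\ $z_{ih}=1$. First I would factor the contribution of the ``own'' component out of the logarithm, writing
\begin{equation*}
\log\!\left[\sum_{g=1}^G \pi_g \phi(\vecx_i\mid\vecmu_g,\vecSigma_g)\right]
= \log\!\left[\pi_h \phi(\vecx_i\mid\vecmu_h,\vecSigma_h)\right]
+ \log\!\left[1 + \sum_{g\neq h} \frac{\pi_g \phi(\vecx_i\mid\vecmu_g,\vecSigma_g)}{\pi_h \phi(\vecx_i\mid\vecmu_h,\vecSigma_h)}\right].
\end{equation*}
The first term on the right is exactly the $(i,h)$ summand of $Q_{\mathcal{X}}$, so it remains to argue that the second term vanishes as cluster separation grows.

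Next I would make the notion of ``separation'' precise — e.g.\ via a parameter $\delta$ that lower-bounds the Mahalanobis-type distance between distinct cluster centres, $\delta \le \min_{g\neq h}(\vecmu_g-\vecmu_h)'\vecSigma_g^{-1}(\vecmu_g-\vecmu_h)$ or a similar quantity — and then bound the ratio $\phi(\vecx_i\mid\vecmu_g,\vecSigma_g)/\phi(\vecx_i\mid\vecmu_h,\vecSigma_h)$ for $g\neq h$. Using the Gaussian density formula, this ratio is $\sqrt{|\vecSigma_h|/|\vecSigma_g|}\,\exp\{-\tfrac12[(\vecx_i-\vecmu_g)'\vecSigma_g^{-1}(\vecx_i-\vecmu_g) - (\vecx_i-\vecmu_h)'\vecSigma_h^{-1}(\vecx_i-\vecmu_h)]\}$. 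Since $\vecx_i$ is close to $\vecmu_h$ (it is a draw from component $h$) but far from $\vecmu_g$, the quadratic-form difference in the exponent grows without bound as $\delta\to\infty$, so each such ratio $\to 0$; hence the finite sum $\sum_{g\neq h}(\cdot)\to 0$ and $\log[1+\sum_{g\neq h}(\cdot)]\to 0$. Summing the per-observation statement over $i=1,\dots,n$ (a finite sum) gives $\ell_{\mathcal{X}} - Q_{\mathcal{X}} \to 0$, which is the claimed asymptotic equivalence $\ell_{\mathcal{X}}\simeq Q_{\mathcal{X}}$. The final equality in the statement, $\sum_{i}\sum_g z_{ig}\log[\pi_g\phi(\vecx_i\mid\vecmu_g,\vecSigma_g)] = \sum_{\vecx_i\in\mathcal{C}_g}\log[\pi_g\phi(\vecx_i\mid\vecmu_g,\vecSigma_g)]$, is just a rewriting using the fact that exactly one $z_{ig}$ is nonzero per $i$.

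The main obstacle is making the limiting argument rigorous rather than heuristic: ``separation increases'' must be tied to a concrete scalar that drives the exponent to $-\infty$ uniformly enough, and one must be slightly careful that $\vecx_i$ itself is random (a realization from component $h$), so the statement is really that the ratio $\to 0$ in probability, or almost surely, for fixed $\vecx_i$ not lying pathologically far from $\vecmu_h$. I would handle this by conditioning on the realized value of $\vecx_i$, treating $(\vecx_i-\vecmu_h)'\vecSigma_h^{-1}(\vecx_i-\vecmu_h)$ as a fixed finite number, and letting only the inter-centre distances grow; then the bound is deterministic. Given the paper's stated willingness to relax Assumption~\ref{ass:sep} in practice and to defer overlap effects to Appendix~\ref{app:overlap}, I expect the intended proof to be at this level of rigor — essentially the density-ratio estimate above — rather than a fully uniform large-deviations treatment.
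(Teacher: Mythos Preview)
Your proposal is correct and rests on the same core idea as the paper: for $\vecx_i\in\mathcal{C}_h$, the off-cluster densities $\phi(\vecx_i\mid\vecmu_g,\vecSigma_g)$, $g\neq h$, vanish as separation grows, so the mixture log-density collapses to the own-component log-density. The executions differ slightly. The paper works directly with the densities rather than with ratios: it uses the eigendecomposition of $\vecSigma^{-1}$ to bound $(\vecx-\vecmu)'\vecSigma^{-1}(\vecx-\vecmu)\ge \inf_i(\lambda_i)\,\|\vecx-\vecmu\|^2$, concludes $\phi(\vecx_i\mid\vecmu_g,\vecSigma_g)\to 0$ as $\|\vecx_i-\vecmu_g\|\to\infty$, writes $\sum_g\pi_g\phi\simeq\pi_h\phi_h$, and then takes logs. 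You instead factor out the own-component term and control the remainder via the density \emph{ratio} inside $\log(1+\cdot)$. Your route is marginally more careful about the passage to logarithms (since $\log A\simeq\log B$ really requires $A/B\to 1$, which your ratio form delivers directly), while the paper's eigenvalue bound makes explicit why Euclidean separation of the centres forces the Mahalanobis exponent to diverge --- a step you leave as ``the quadratic-form difference grows without bound.'' Both arguments operate at the informal level you anticipated regarding the randomness of $\vecx_i$; the paper simply fixes $\vecx_i$ and lets $\|\vecx_i-\vecmu_g\|\to\infty$ without further probabilistic qualification.
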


\noindent A proof of Lemma~\ref{lem:loglik} may be found in~\appref{app:proofs}. We will maintain Assumption \ref{ass:sep} throughout this paper. The quantity $l_\mathcal{X}$ denotes the complete-data log-likelihood for the entire dataset~$\mathcal{X}$, and we use $l_{\mathcal{X}\setminus \vecx_j}$ to denote the complete-data log-likelihood for the $j${th} subset $\mathcal{X}\setminus \vecx_j$. Finally,  we define the variable $Y_j=l_{\mathcal{X} \setminus \vecx_j}-l_\mathcal{X}$ as the difference between the $j${th} subset complete-data log-likelihood and the complete-data log-likelihood for the entire dataset. This leads to our main result, which follows as Theorem~\ref{prop2}.
\begin{theorem}\label{prop2}
	For a point $\vecx_j$ belonging to the $h$th cluster, i.e., $z_{jh}=1$, if $l_\mathcal{X}$ is the complete-data log-likelihood and $Y_j=l_{\mathcal{X} \setminus \vecx_j}-l_\mathcal{X}$, then $Y_j \mid (z_{jh}=1) $ has an approximate shifted and scaled beta density, i.e.,
	\begin{equation}
		\label{eq:L}
		Y_j \mid (z_{jh}=1) \sim f_{\text{beta}}\left(\frac{2n_h}{(n_h-1)^2} (y_j-c)~\bigg|~\frac{p}{2},\frac{n_h-p-1}{2}\right)
	\end{equation}
	for $c<y_j<\frac{(n_h-1)^2}{2n_h}+c, n_h>p+1$, where $c=-\log\hat{\pi}_h+\frac{p}{2}\log(2\pi)+\frac{1}{2}\log\abss{\sampcov_h}$, {$n_h$ is the number of points in cluster $h$}, $\hat{\pi}_h=n_h/n$, $$\sampcov_h=\frac{1}{n_h-1}\sum_{i=1}^n z_{ih}(\vecx_i-\bar{\vecx}_h)(\vecx_i-\bar{\vecx}_h)'$$ is the sample covariance matrix of cluster $h$, and $\bar{\vecx}_h=\frac{1}{n_h}\sum_{i=1}^n z_{ih}\vecx_i.$
\end{theorem}
\begin{proof}
Our parameters are unknown in unsupervised classification, so we must estimate them. Due to their desirable qualities, we will replace $\pi_g$, $\vecmu_g$ and $\vecSigma_g$ by their unbiased estimates:
\begin{equation}\label{eqn:samp_est}
\hat{\pi}_g=n_g/n,\qquad
		\hat{\vecmu}_g=\bar{\vecx}_g,\qquad
		\hat{\vecSigma}_g=\sampcov_g,
\end{equation}
where $n_g=\sum_{i=1}^n z_{ig}$ is the number of observations in the $g$th cluster, $\hat{\pi}_{g}$ is the sample proportion, $\bar{\vecx}_g$ is the sample mean, and $\sampcov_g$ is the sample covariance for the $g${th} cluster considering all observations in the entire dataset $\mathcal{X}$. Now consider $\hat{\pi}_{g\setminus j}$, $\bar{\vecx}_{g \setminus j}$ and $\sampcov_{g\setminus j}$, the sample proportion, sample mean, and sample covariance matrix, respectively, for the $g${th} cluster considering only observations in the $j${th} subset $\mathcal{X}\setminus \vecx_j$. We will require $\hat{\pi}_{g\setminus j}\approx\hat{\pi}_g$, $\bar{\vecx}_{g\setminus j}\approx \bar{\vecx}_g$  and $\sampcov_{g\setminus j}\approx\sampcov_g$ for all $j$. Because we have $\hat{\pi}_{g\setminus j}\rightarrow\hat{\pi}_g$ $\bar{\vecx}_{g\setminus j}\rightarrow \bar{\vecx}_g$  and $\sampcov_{g\setminus j}\rightarrow\sampcov_g$ as $n_g\rightarrow\infty$ (proof in \appref{app:estconv}), we need the following assumption.
	\begin{assumption}
		The number of observations in each cluster, $n_g$, is large.
	\end{assumption}

		When $n_g$ is large, sample parameter estimates $\hat{\pi}_g$, $\bar{\vecx}_g$, and $\sampcov_g$, $g \in [1,G]$, approach the true parameters and they are treated as constant for each subset $\mathcal{X} \setminus \vecx_j$, $j \in [1,n]$. Thus, the complete-data log-likelihood for the $j$th subset, $\mathcal{X} \setminus \vecx_j$, when $z_{jh}=1$ is
		\begin{equation}\label{eqn:7}
			l_{\mathcal{X} \setminus \vecx_j} \approx l_\mathcal{X} - \log\hat{\pi}_h- \log\phi(\vecx_j\mid\bar{\vecx}_h,\sampcov_h).
		\end{equation}
		Rearranging \eqref{eqn:7} yields
		\begin{equation}
			\begin{split}
			l_{\mathcal{X} \setminus \vecx_j}-l_\mathcal{X}
				&\approx-\log\hat{\pi}_h+\frac{p}{2}\log(2\pi)+\frac{1}{2}\log\abss{\sampcov_h}+\frac{1}{2}t_j,
			\end{split}
			\label{eq:difflikpop}
		\end{equation}
		where $t_j=(\vecx_j-\bar{\vecx}_h)'\sampcov_h^{-1}(\vecx_j-\bar{\vecx}_h)$ is the squared Mahalanobis distance for point $\vecx_j$ using the sample parameter estimates in \eqref{eqn:samp_est} when $z_{jh}=1$. 
	
	\begin{lemma}
		Sample squared Mahalanobis distance is distributed according to a scaled beta distribution \citep[]{gnanadesikan72}. When the data are multivariate normally distributed, i.e., $\vecX \sim \text{MVN}\left(\vecmu,\vecSigma\right)$, $$\frac{n}{(n-1)^2}T_j \sim f_{\text{beta}}\left(\frac{n}{(n-1)^2} t_j~\Bigg|~\frac{p}{2},\frac{n-p-1}{2}\right)$$ for $0\leq t_j \leq (n-1)^2/n$.
		\label{lem:tbeta}
	\end{lemma}
\noindent \cite{ververidis08} prove Lemma~\ref{lem:tbeta} for all $n, p$ satisfying $p<n<\infty$. 

Finally, we will perform a change of variables to prove our main result from Theorem~\ref{prop2}. Let $W_j=\frac{n_h}{(n_h-1)^2}T_j$ and $Y_j\mid (z_{jh}=1)=\frac{1}{2}T_j+c$, where $c=-\log\hat{\pi}_h+\frac{p}{2}\log(2\pi)+\frac{1}{2}\log\abss{\sampcov_h}$. Then 
	\begin{equation*}
		Y_j\mid (z_{jh}=1)=\frac{(n_h-1)^2}{2n_h}W_j+c.
	\end{equation*}
	Because $W_j$ is beta distributed, it has density of the from \eqref{eq:beta}. The change of variables allows the density of $Y_j\mid (z_{jh}=1)$ to be written
	\begin{equation}
		f_{Y\mid (z_{jh}=1)}(y_j)=\frac{2n_h}{(n_h-1)^2}\frac{\Gamma(\alpha+\beta)}{\Gamma(\alpha)\Gamma(\beta)}\left[\frac{2n_h}{(n_h-1)^2}(y_j-c)\right]^{\alpha-1}\left[1-\frac{2n_h}{(n_h-1)^2}(y_j-c)\right]^{\beta-1}, \\
		\label{eq:ydens}
	\end{equation}
	for $c<y_j<\frac{(n_h-1)^2}{2n_h}+c, \alpha>0, \beta>0$. Thus, $Y_j\mid (z_{jh}=1)$ is distributed according to a shifted and scaled beta distribution, i.e.,
	\begin{equation}
		Y_j \mid (z_{jh}=1) \sim f_{\text{beta}}\left(\frac{2n_h}{(n_h-1)^2} (y_j-c)~\bigg|~\frac{p}{2},\frac{n_h-p-1}{2}\right).
	\end{equation}
	
	\end{proof}
	Now, the density for $Y_j$ is conditional on $z_{jh}$. Denote this density by $f_h(y)$. Because $P(z_{jh}=1)=\pi_h$ for all $j$, the density of $Y$, unconditional on $\vecz$ is
	\begin{equation}
		f(y\mid\vecvartheta)=\sum_{g=1}^{G}{\pi}_g  f_g(y \mid \vectheta_g),
		\label{eq:mixdens}
	\end{equation}
	where $f_g(y \mid \vectheta_g)$ is a shifted and scaled beta density described in \eqref{eq:ydens}, and $\vectheta_g=\{n_g,p,\hat{\pi}_g, \bar\vecx_g, \sampcov_g\}$.
	\begin{rmk}\label{rem:ass}
		It is important to note that \eqref{eq:mixdens} does not depend on the choice of $\vecx_j$. Rather, it is the distribution for $n-1$ of the points with any one point $\vecx_j$ removed. The random variable $Y$ has density $f(y \mid \vecvartheta)$ from \eqref{eq:mixdens} when the data arise from a finite Gaussian mixture model without outliers. Divergence from this distribution indicates a misspecified model, which we assume to be due to the presence of outliers. This forms the basis for the OCLUST algorithm.\end{rmk}
	
	\subsection{OCLUST Algorithm}\label{sec:algorithms}
	Consider $y_1,\ldots,y_n$ to be realizations of the random variable $Y$, which has density given by \eqref{eq:mixdens}. We propose testing the adherence of $y_1,\ldots,y_n$ to the reference distribution in \eqref{eq:mixdens} as a way to test for the presence of outliers. In other words, if $y_1,\ldots,y_n$ does not have a beta mixture distribution, then we assume outliers are present in the model.  Because ${\ell}_\mathcal{X}$ converges to $l_\mathcal{X}$ and acts as a good approximation, we will use $\ell_\mathcal{X}$. This is important because we will need $\ell_{\mathcal{X} \setminus \vecx_j}$ for outlier identification and, additionally, it is outputted by many existing clustering algorithms. The algorithm described below uses the log-likelihood and parameter estimates calculated using the expectation-maximization (EM) algorithm \citep{dempster77} for Gaussian model-based clustering; however, other methods may be used for parameter estimation.
	
	The proposed algorithm is called OCLUST. It both identifies likely outliers and estimates the proportion of outliers within the dataset. When $\vecX$ is distributed according to a Gaussian mixture model, then ${Y}$ has the probability density in \eqref{eq:mixdens}. If $y_1,\ldots,y_n$ does not come from \eqref{eq:mixdens}, then $\vecx_1,\ldots,\vecx_n$ does not arise from a Gaussian mixture model \eqref{eq:generalmixture1}, implying that the model is misspecified. We assume the misspecification is due to outliers.
	The OCLUST algorithm assumes that the model is otherwise correctly specified and when $Y$ does not follow the distribution in \eqref{eq:mixdens}, then outliers are present. The `closeness' of the distribution of $Y$ to \eqref{eq:mixdens} is assessed using the Kullback-Leibler (KL) divergence, estimated via relative frequencies.  The algorithm (Algorithm~\ref{OCLUST}) involves removing candidate outliers one-by-one until KL divergence is minimized. At each step, we obtain the log-likelihood of the entire dataset and the log-likelihoods of the $n$ subsets. We then assess the distribution of subset log-likelihoods and  determine our next candidate outlier, $t$, according to Definition~\ref{def:out}. This candidate outlier is then trimmed and the algorithm continues to the next iteration. Notably, KL divergence generally decreases as outliers are removed and the model improves. Once all outliers are removed, KL divergence increases again as points are removed from the tails.  We select the number of outliers as the location of the minimum KL divergence.
	
	The OCLUST algorithm is outlined in Algorithm~\ref{OCLUST}, with $n$ being the size of the dataset, $G$ the number of clusters, $B \in [0,F-1]$ the number of initially rejected outliers (see Remark~\ref{rmk:B}), and $F$ the chosen upper bound (see Remark~\ref{rmk:F}). An \textsf{R} implementation of the algorithm is available as the \texttt{oclust} package on CRAN \citep{clark19}. 
	
	\begin{rmk}[Initialization]
		In practice, we do not need to start the algorithm with the entire dataset. We may first remove obvious, gross outliers and start the algorithm from this intermediate solution. This greatly reduces computation time by reducing the number of iterations and it may improve classification by decreasing the likelihood of gross outliers being placed into their own clusters. \label{rmk:B}
	\end{rmk}
	
	\begin{rmk}[Choice of $F$]
		We choose $F$ as an upper bound for the number of outliers. It is used to limit the number of iterations for the algorithm to reduce the computation time. $F$ may be chosen using prior knowledge about the type of data, or we may choose large $F$ to be conservative. In the absence of any information about the dataset, we could choose, e.g., $F=n/4$, setting the maximum proportion of outliers to be 25\%.\label{rmk:F}
	\end{rmk}	
\begin{algorithm}[!ht]
		\caption{OCLUST algorithm}\label{OCLUST}
		\begin{algorithmic}[1]
			\Procedure{OCLUST}{$\mathcal{X},n,G,F$}
			\State (Optional) Identify gross outliers, $b$, using method of choice. $B=\#b$. 
			\State \begin{varwidth}[t]{\linewidth}
				Update:\par
				\hskip\algorithmicindent $n \hookleftarrow n-B$\par
				\hskip\algorithmicindent 	$\mathcal{X} \hookleftarrow \mathcal{X} \setminus b$
			\end{varwidth}\newline
			\For{$f $ in $B:F$}
			\State \begin{varwidth}[t]{\linewidth}
				Cluster the data $\mathcal{X}$ into $G$ clusters, using a Gaussian model-based \newline
				\hskip\algorithmicindent clustering algorithm, e.g., the EM algorithm in the \texttt{mixture} package \label{step: cluster}	\end{varwidth} \newline
			\State Output $\ell_\mathcal{X}$, and for each cluster save: $\sampcov_g$, $n_g$,  $\hat\pi_g$. \label{step:paras}
			\For{$j $ in $1:n$}
			\State \begin{varwidth}[t]{\linewidth}
				Cluster the subset $\mathcal{X} \setminus \vecx_j$ into $G$ clusters, using the method chosen\newline
				\hskip\algorithmicindent in Step \ref{step: cluster}.	\end{varwidth} \newline
			\State Output $\ell_{\mathcal{X} \setminus \vecx_j}$ and calculate $y_j=\ell_{\mathcal{X} \setminus \vecx_j}-\ell_\mathcal{X}$.
			\EndFor
			\State \label{step:dens}Generate the density of $Y$ using \eqref{eq:mixdens} and the parameters from Step~\ref{step:paras}.
			\State \begin{varwidth}[t]{\linewidth}
				Calculate the approximate KL divergence of $y_1,\ldots,y_n$   to the density\newline
				\hskip\algorithmicindent in Step~\ref{step:dens}, using relative frequencies.	\end{varwidth} \newline
			
			\State Determine the most likely outlier $t$ as per Definition~\ref{def:out}.
			\State \begin{varwidth}[t]{\linewidth}
				Update:\par
				\hskip\algorithmicindent $n \hookleftarrow n-1$\par
				\hskip\algorithmicindent 	$\mathcal{X} \hookleftarrow \mathcal{X} \setminus t$
			\end{varwidth}\newline
			\EndFor
			\State Choose $f$ for which the KL divergence is minimized \newline 
			\Comment{This is the predicted number of outliers.} \newline
			\Comment{Use the model corresponding to iteration $f$}
			\EndProcedure
		\end{algorithmic}
	\end{algorithm}
	
	\subsection{Computational Complexity}
	Clustering must be performed $n~+~1$ times for each of the $F+1$ iterations of the algorithm.  In our simulation study, we use the \texttt{mixture} package to implement the EM algorithm, the order of which is generally $O(Gp^3n)$. As a result, OCLUST is $O(FGp^3n^2)$. The computation time can be reduced by initializing the subset models with the parameters from the full model to vastly decrease the number of iterations required within the clustering algorithm. In addition, subset models are independent and can easily be computed in parallel. 
	Computation times for applying OCLUST to the datasets in Section~\ref{sec:apps} are available in \appref{app:timing}. These times are similar to or somewhat faster than the mean-shift outlier detection algorithm when OCLUST is run in parallel. It is important to note that these times for OCLUST take into account the algorithm running to the upper bound $F$, which is often unnecessary. Thus, we propose an early stopping criterion below. 	
	
	The original version of the OCLUST algorithm involves calculating the KL divergence over a range of possible numbers of outliers. The best model is then chosen as the one which minimizes the KL divergence. Instead, consider a stopping rule for the algorithm to minimize the number of iterations required. We can stop the algorithm once the data fit well to the distribution. One way to do this is using Kuiper's Test \citep{kuiper60}, where 
	\begin{align*}
		\text{H}_0&:\text{The data arise from the specified distribution.}\\
		\text{H}_A&:\text{The data contradict the specified distribution.}
	\end{align*}
	By conducting Kuiper's test, we compare the empirical cumulative distribution function (CDF) to the CDF of the proposed beta mixture distribution. The test statistic is of the form
$$T_0=D^+ +D^-,$$
where $D^+=\max[F_E(y_i)-F_0(y_i)]$ and $D^-=\max[F_0(y_i)-F_E(y_i)]$, for $i\in [1,n]$, $F_0(y_i)$ is the CDF of $Y$ at point $y_i$, and $F_E(y)$ is the empirical CDF at point $y_i$ for a sample of size $n$, i.e., $F_E(y_i) =i/n$.
	An approximate p-value can be estimated using Monte Carlo simulation from the proposed distribution in \eqref{eq:mixdens}. A total of $m$ datasets of size $n$ are simulated and the test statistic is calculated for each one. Then, we estimate the p-value as $(r+1)/(m+1)$, where $m$ is the number of samples generated and $r$ is the number of samples to produce a test statistic greater than or equal to $T_0$ \citep{north02}. Instead of using KL divergence in the OCLUST algorithm, we calculate the approximate p-value for each iteration and stop once it is higher than a pre-specified significance level, e.g., 5\% or 10\%.

	\section{Applications}\label{sec:apps}
	\subsection{Illustrative Example}\label{sec:example}
To illustrate how the OCLUST algorithm works, we return to the dataset in Figure~\ref{fig:toy}, but add three uniform outliers instead of 100. We run the OCLUST algorithm using the \texttt{oclust} package \citep{clark19}, setting $B=0$ and $F=10$. We achieve minimum KL divergence after the three outliers are removed. The optimal solution and corresponding KL graph are plotted in Figure~\ref{fig:illex}.
\begin{figure*}[!ht]
	\centering
	\begin{subfigure}[t]{3.2in}
		\centering
		\includegraphics[width=3.2in]{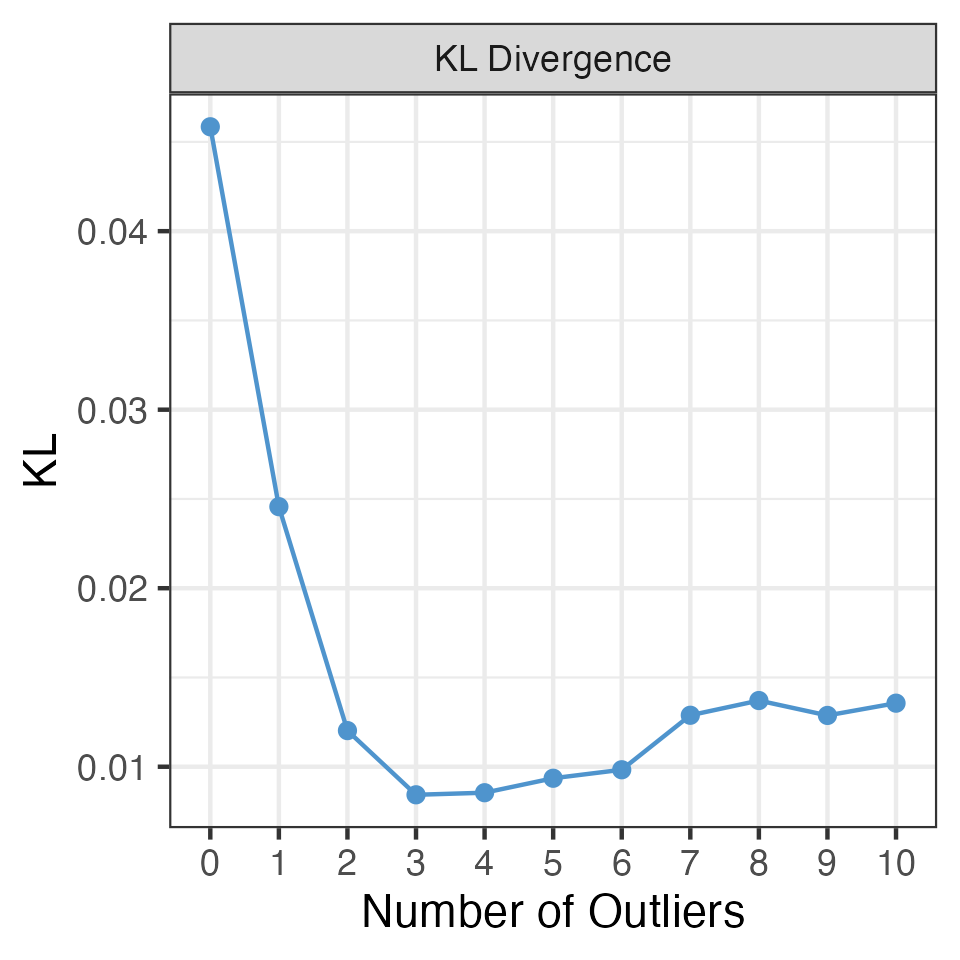}
	\end{subfigure}
	~
\hspace{-0.25in}	\begin{subfigure}[t]{3.2in}
		\centering
		\includegraphics[width=3.2in]{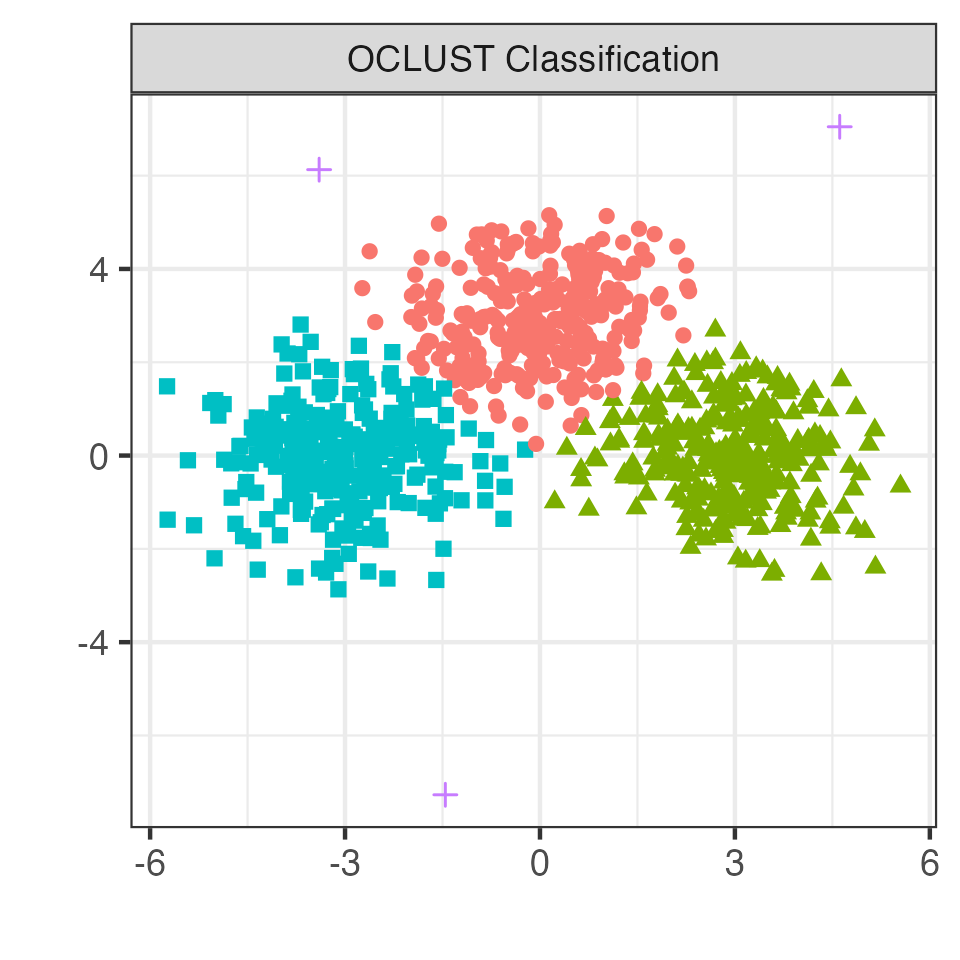}
	\end{subfigure}%
	\caption{Results from the illustrative example in Section~\ref{sec:example}, with a graph of KL divergence (left) and predicted classifications by the OCLUST algorithm (right). }
\label{fig:illex}
\end{figure*}
	
\subsection{Simulation Study}\label{sec:sims}
	The following simulation study tests the performance of OCLUST against the following five popular algorithms:
	\begin{enumerate}[label=\alph*.]
		\item Contaminated normal mixtures \citep[CNMix;][]{punzo16b};
		\item Noise component mixtures --- mixtures of Gaussian clusters and a uniform component \citep[NCM;][]{banfield93}; 
		\item DBSCAN; 
		\item Mean-shift outlier detection and filtering \citep{franti18a}; and
		\item 2T \citep{yang19}, i.e., a thresholding approach.
	\end{enumerate}
	
	OCLUST is a trimming method which removes outliers from the model. In contrast, CNMix and NCM  are outlier inclusion methods. CNMix treats outliers as contamination and assumes the outliers are cluster-specific and reside around the clusters. NCM assumes that outliers are included uniformly. DBSCAN treats clusters as density-connected points and outliers as points that do not belong to clusters. Mean-shift outlier detection is a pre-processing step for a subsequent clustering algorithm. Finally, the thresholding approach considers a point an outlier if the outlier `score' exceeds a specified threshold. Following \cite{yang19}, we apply this procedure twice (2T). Calculated before clustering, we choose the outlier `score' to be the squared Mahalanobis distance from the centre of the dataset. This measure accounts for differences in variability among dimensions. Subsequently, we cluster the remaining data using the EM algorithm in the \texttt{mixture} package. These methods are broadly classified into two types: those that remove outliers and then cluster (mean-shift and {2T}) and those that do both simultaneously (OCLUST, CNMix, NCM, DBSCAN). 
	
	The following simulation scheme closely follows that in \cite{franti18a}. We use the same eight two-dimensional benchmark datasets  plus one 32-dimensional dataset \citep{franti18b}. Sets S1--S4 investigate increasing degrees of overlap. There is a mix of 15 spherical and non-spherical Gaussian clusters with overlap ranging from 9\% to 44\%.  Sets A1--A3 have increasing numbers of clusters, with all sets having spherical Gaussian clusters with equal cluster sizes, deviation, and overlap. The Unbalance set has three clusters densely populated with 2000 points each and five clusters sparsely populated with 100 points each. The Dim032 dataset has 16 well-separated Gaussian clusters in 32 dimensions. The datasets are summarized in Table~\ref{tab:data} and plotted in Figure~\ref{fig:data}, with a two-dimensional projection shown for the Dim032 dataset. 	
	\begin{table}[!ht]
		\centering
		\caption{Datasets used in the simulation study, where U and D32 represent the Unbalance and Dim032 sets, respectively, and $F$ indicates the upper bound used in the OCLUST algorithm.}	 \label{tab:data}
		\begin{tabular}{r|ccccccccc}
			\hline
			Dataset & A1& A2& A3& S1 & S2 & S3& S4& U & D32\\
			\hline
			Size & 3000& 5250& 7500& 5000 & 5000 & 5000& 5000& 6500 & 1024\\
			Clusters & 20& 35& 50& 15 & 15 & 15& 15& 8 & 16\\
			Noise & 210& 368& 525& 350 & 350 & 350& 350& 455 & 72\\
			\hline
			$F$ &300&525&750&500&500&500&500&650&110\\
			\hline
		\end{tabular} 
	\end{table}
	\begin{figure}[!htb]
		\centering
		\includegraphics[width=\textwidth]{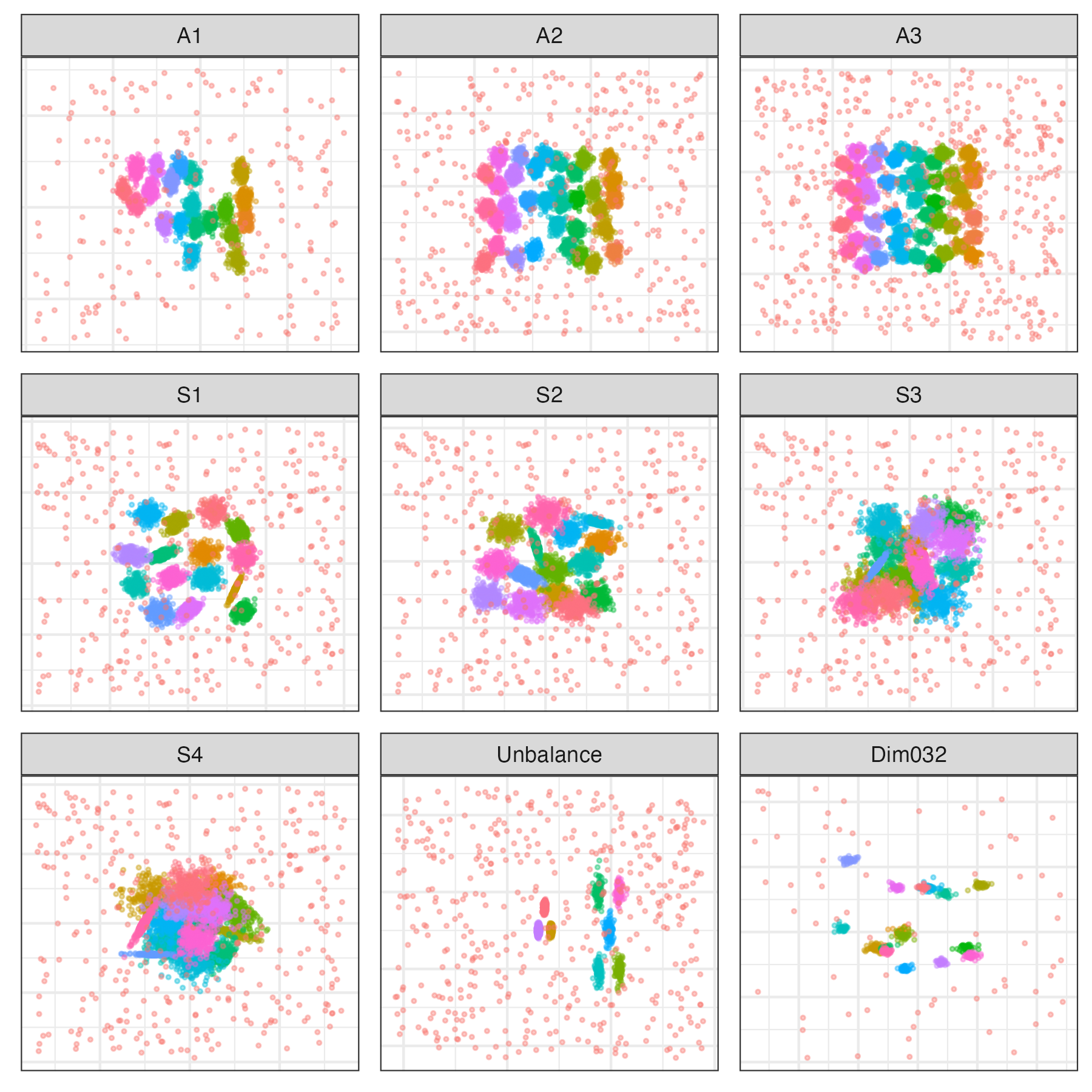}
		\caption{The nine benchmark datasets used herein, where red points represent added uniform noise and other colours correspond to the true classes.}
		\label{fig:data}
	\end{figure}
	
	Following \cite{franti18a}, we set the proportion of outliers at 7\%. We generate uniform noise in each dimension in the range $[x_{\text{mean}}-2\times\text{range},x_{\text{mean}}+2\times\text{range}]$, where the range is the difference between the farthest observation and the mean in each dimension.
	
	We run {\tt oclust} in \textsf{R}, fixing the upper bound $F$ from Table~\ref{tab:data}, i.e., at or about 10\% of the original data size. We eliminate initial gross outliers with large squared Mahalanobis distances to regions of high density identified by the \texttt{dbscan} package \citep{hahsler19}. Covariance matrices for the S sets, A sets, and Unbalance set are unrestricted, allowing varying volumes, shapes, and orientations among the clusters. We restrict the Dim032 set to a variance structure with varying volumes and spherical shapes to hasten computation time.  We run CNMix using the {\tt CNmixt} function from the {\tt ContaminatedMixt} package \citep{punzo18} with default $k$-means initialization. The {\tt CNmixt} function chooses the best model for variance and contamination using the BIC. We run NCM using the {\tt Mclust} function in the {\tt mclust package} \citep{scrucca16}, initializing the noise component as a random sample of points with probability 1/4. We let the package choose the best model using the Bayesian information criterion \citep{schwarz78}. We run mean-shift outlier detection in Python using \cite{yangcode}'s code. We specify the number of clusters and perform three iterations. Due to its superior performance in \cite{franti18a},  non-outlying points are clustered using random swap clustering in Python \citep{yang21,franti18} and run for the recommended 5000 iterations.  We implement the DBSCAN algorithm with the \texttt{dbscan} package, setting the minimum number of points per cluster to be 20. For the thresholding approach, we calculate the squared Mahalanobis distance from the centre of the dataset and trim those that are more than three standard deviations away from the mean. This is performed twice, and the remaining points are clustered with the \texttt{mixture} package. 
	
	The performance of each method can be assessed using three approaches for evaluating classification accuracy --- adjusted Rand index \citep[ARI;][]{hubert85}, normalized mutual information \citep[NMI;][]{kvalseth87}, and centroid index \citep[CI;][]{franti14} --- and two approaches of outlier detection performance --- true positive rate (TPR) and false positive rate (FPR). The ARI is a measure of agreement between two partitions, e.g., between clusters and true classes. The ARI equals 1 with  perfect agreement and has expected value 0 when randomly partitioned. Negative values indicate partitions with less agreement than would be expected by random chance. NMI is a measure of the amount of information shared between the true classes and the clustering solution. It is a scale from 0 to 1, with 0 corresponding to no mutual information and 1 corresponding to perfect agreement. CI is a measure of dissimilarity among cluster centroids, where $\text{CI}=0$ reflects one-to-one correspondence between the estimated cluster centres and the ground truth. We calculate ARI and NMI for cluster agreement {taking into account whether points are classified into the correct group, considering outliers as one of those groups.} TPR reflects the proportion of true outliers classified as such and FPR reflects the proportion of non-outliers classified as outliers. The results are shown in Table~\ref{tab:simres}, with the best algorithm according to each approach bolded. 
	\begin{table}
		\begin{center}
			\caption{Results of the six outlier identification algorithms on the benchmark datasets. ARI and NMI consider cluster classification with `outlier' being considered a class.} \label{tab:simres}
			\resizebox{0.49\textwidth}{!}{
				\begin{tabular}{lrrrrr}  
					\multicolumn{6}{c}{A1 Dataset} \\
					\hline 
					 Algorithm & ARI & NMI & CI & TPR & FPR \\ 
					\hline
					OCLUST & 0.96 & 0.97 &   \textbf{0} & \textbf{0.87 }& \textbf{0.00} \\ 
					Mean-Shift &\textbf{ 0.97} & \textbf{0.98} &  \textbf{ 0} & 0.82 &  \textbf{0.00} \\ 
					2T & 0.84 & 0.92 &   1 & 0.69 &  \textbf{0.00} \\ 
					CNMix & 0.74 & 0.86 &   5 & 0.00 &  \textbf{0.00}\\ 
					NCM & 0.79 & 0.89 &   3 & 0.80 &  \textbf{0.00} \\ 
					DBSCAN & 0.01 & 0.06 &  19 & 0.85 &  \textbf{0.00}\\ 
					\hline
			\end{tabular}
		}
			\hfil   
			\resizebox{0.49\textwidth}{!}{
				\begin{tabular}{lrrrrr}  
					\multicolumn{6}{c}{A2 Dataset} \\
					\hline 
					Algorithm & ARI & NMI & CI & TPR & FPR \\ 
					\hline
					OCLUST & \textbf{0.95} & \textbf{0.97} &   \textbf{0} & \textbf{0.82} & \textbf{0.00} \\ 
					Mean-Shift & 0.94 & \textbf{0.97} &   \textbf{0} & 0.73 & \textbf{0.00} \\ 
					2T & 0.81 & 0.93 &   4 & 0.68 & \textbf{0.00} \\ 
					CNMix & 0.72 & 0.89 &   7 & 0.00 &\textbf{0.00} \\ 
					NCM & 0.55 & 0.75 &  17 & 0.60 & \textbf{0.00} \\ 
					DBSCAN & 0.00 & 0.04 &  34 & 0.73 & \textbf{0.00} \\ 
					\hline
			\end{tabular}
		}
			\medskip
			\resizebox{0.49\textwidth}{!}{
				\begin{tabular}{lrrrrr}  
					\multicolumn{6}{c}{A3 Dataset} \\
					\hline 
					Algorithm & ARI & NMI & CI & TPR & FPR \\ 
					\hline
					OCLUST & \textbf{0.94 }& \textbf{0.97} &   \textbf{0} & \textbf{0.83 }& \textbf{0.00} \\ 
					Mean-Shift & 0.93 & \textbf{0.97} &   \textbf{0} & 0.72 & \textbf{0.00} \\ 
					2T & 0.84 & 0.95 &   4 & 0.67 & \textbf{0.00} \\ 
					CNMix & 0.72 & 0.90 &  11 & 0.00 & \textbf{0.00} \\ 
					NCM & 0.53 & 0.76 &  26 & 0.56 & \textbf{0.00} \\ 
					DBSCAN & -0.00 & 0.04 &  49 & 0.68 & \textbf{0.00} \\ 
					\hline
			\end{tabular}
		}
			\hfil   
			\resizebox{0.49\textwidth}{!}{
				\begin{tabular}{lrrrrr}  
					\multicolumn{6}{c}{S1 Dataset} \\
					\hline 
					Algorithm & ARI & NMI & CI & TPR & FPR \\ 
					\hline
					OCLUST & \textbf{0.96} & \textbf{0.96} &   \textbf{0} & 0.89 & 0.01 \\ 
					Mean-Shift & 0.95 & \textbf{0.96} &    \textbf{0} & 0.87 & 0.01 \\ 
					2T & 0.95 & \textbf{0.96} &    \textbf{0}  & 0.72 & \textbf{0.00} \\ 
					CNMix & 0.80 & 0.87 &   2 & 0.00 & \textbf{0.00} \\ 
					NCM & 0.90 & 0.93 &   1 & 0.85 & \textbf{0.00} \\ 
					DBSCAN & 0.01 & 0.18 &    \textbf{0}  & \textbf{1.00} & 0.81 \\ 
					\hline
			\end{tabular}
		}
			\medskip
			\resizebox{0.49\textwidth}{!}{
				\begin{tabular}{lrrrrr}  
					\multicolumn{6}{c}{S2 Dataset} \\
					\hline 
					Algorithm & ARI & NMI & CI & TPR & FPR \\ 
					\hline
					OCLUST & 0.91 & 0.92 &   \textbf{0}& 0.81 & \textbf{0.00} \\ 
					Mean-Shift & \textbf{0.92} & \textbf{0.93} &   \textbf{0} & 0.84 & \textbf{0.00} \\ 
					2T & 0.70 & 0.83 &   1 & 0.73 & \textbf{0.00} \\ 
					CNMix & 0.72 & 0.82 &   3 & 0.07 & 0.02 \\ 
					NCM & 0.86 & 0.90 &   1 & 0.82 & \textbf{0.00} \\ 
					DBSCAN & 0.01 & 0.13 &   2 & \textbf{1.00} & 0.87 \\ 
					\hline
			\end{tabular}
		}
			\hfil   
			\resizebox{0.49\textwidth}{!}{
				\begin{tabular}{lrrrrr}  
					\multicolumn{6}{c}{S3 Dataset} \\
					\hline 
					Algorithm & ARI & NMI & CI & TPR & FPR \\ 
					\hline
					OCLUST & \textbf{0.72 }& \textbf{0.79 }&   \textbf{0} & 0.85 & 0.01 \\ 
					Mean-Shift & 0.71 & 0.78 &   \textbf{0} & 0.83 & 0.01 \\ 
					2T & 0.42 & 0.67 &   1 & 0.73 & \textbf{0.00} \\ 
					CNMix & 0.54 & 0.69 &   4 & 0.00 & \textbf{0.00}\\ 
					NCM & 0.55 & 0.72 &   3 & 0.80 & \textbf{0.00} \\ 
					DBSCAN & 0.01 & 0.07 &   9 & \textbf{1.00} & 0.93 \\ 
					\hline
			\end{tabular}
		}
			\medskip
			\resizebox{0.49\textwidth}{!}{
				\begin{tabular}{lrrrrr}  
					\multicolumn{6}{c}{S4 Dataset} \\
					\hline 
					Algorithm & ARI & NMI & CI & TPR & FPR \\ 
					\hline
					OCLUST & 0.42 & 0.65 &   1 & 0.91 & 0.02 \\ 
					Mean-Shift & \textbf{0.63} & \textbf{0.72} &   \textbf{0} & 0.86 & 0.01 \\ 
					2T & 0.16 & 0.50 &   1 & 0.79 & \textbf{0.00} \\ 
					CNMix & 0.40 & 0.59 &   4 & 0.00 & \textbf{0.00} \\ 
					NCM & 0.56 & 0.69 &   2 & 0.83 & \textbf{0.00} \\ 
					DBSCAN & 0.01 & 0.11 &   8 & \textbf{1.00} & 0.89 \\ 
					\hline
			\end{tabular}
		}
			\hfil   
			\resizebox{0.49\textwidth}{!}{
				\begin{tabular}{lrrrrr}  
					\multicolumn{6}{c}{Unbalance Dataset} \\
					\hline 
					Algorithm & ARI & NMI & CI & TPR & FPR \\ 
					\hline
					OCLUST & \textbf{1.00} & \textbf{0.99} &   \textbf{0} & 0.96 & \textbf{0.00} \\ 
					Mean-Shift & 0.43 & 0.58 &   5 & \textbf{1.00 }& 0.10 \\ 
					2T & 0.88 & 0.85 &   3 & 0.80 & 0.01 \\ 
					CNMix & 0.99 & 0.94 &   3 & 0.00 & \textbf{0.00} \\ 
					NCM & 0.99 & 0.96 &   2 & 0.91 & \textbf{0.00} \\ 
					DBSCAN & 0.98 & 0.91 &   \textbf{0} & 0.99 & 0.05 \\ 
					\hline
			\end{tabular}
		}
			\medskip
			\resizebox{0.49\textwidth}{!}{
				\begin{tabular}{lrrrrr}  
					\multicolumn{6}{c}{Dim032 Dataset} \\
					\hline 
					Algorithm & ARI & NMI & CI & TPR & FPR \\ 
					\hline
					OCLUST & 0.99 & 0.99 &   \textbf{0} & \textbf{1.00} & \textbf{0.00}\\ 
					Mean-Shift & \textbf{1.00} & \textbf{1.00} &   \textbf{0} & \textbf{1.00 }& \textbf{0.00} \\ 
					2T & 0.71 & 0.87 &   4 & \textbf{1.00} & \textbf{0.00} \\ 
					CNMix & 0.07 & 0.38 &   1 & 0.99 & 0.60 \\ 
					NCM & 0.98 & 0.99 &   \textbf{0} & \textbf{1.00} & 0.01 \\ 
					DBSCAN & 0.00 & 0.00 &  15 & 0.00 & \textbf{0.00} \\ 
					\hline
			\end{tabular}
		}
		\end{center}		
	\end{table}
	
	For datasets with many clusters, i.e., A2 and A3, OCLUST does better than its competitors. OCLUST and mean-shift have $\text{CI}=0$ for all but one dataset. Mean-shift performs better than the Gaussian model-based approaches as the clusters become highly overlapped in S4. The model-based approaches cannot recover symmetric distributions, such as the Gaussian distribution, for each component when the clusters severely overlap. The OCLUST and mean-shift algorithms attain similar results on the A1, S1, S2, and S3 datasets. Mean-shift performs well, as expected, for those datasets because the clusters are spherical and well-separated.
	With the exception of DBSCAN, most of the algorithms have a nearly zero FPR. DBSCAN has a perfect TPR for four of the datasets, but this is also accompanied by a very large FPR, indicating that DBSCAN is considering most points to be outliers for those datasets. After discarding these results from DBSCAN, OCLUST has the largest TPR for most of the datasets.
	
	OCLUST, mean-shift, and NCM perform very well on the 32-dimensional dataset, {likely due to the uniform generation of outliers.} This is because, as the dimensionality increases, it is more likely for a generated outlier to be far from the clusters in any one direction. OCLUST, CNMix, NCM, and DBSCAN perform well on the Unbalance dataset. Mean-shift filters most of the Unbalance points as noise, resulting in poor classification. DBSCAN performs poorly on most datasets, often over-specifying the number of outliers and combining clusters together. The thresholding approach performs consistently for outlier identification but fails when clusters become more overlapped, likely due to remaining outliers reducing clustering performance. 
	
	In Figure~\ref{fig:KL}, the KL graph for the A3 dataset is plotted. It displays a typical shape, decreasing to a minimum at 446 outliers, after which KL increases. The minimum at 446 is notably different from 525, the number of outliers generated. However, this is to be expected due to the uniform generation of outliers because many of them overlap with the clusters as seen in Figure~\ref{fig:data}. All the algorithms experience the same issue in this respect.
	\begin{figure}[!htb]
		\centering
		\includegraphics[width=4 in]{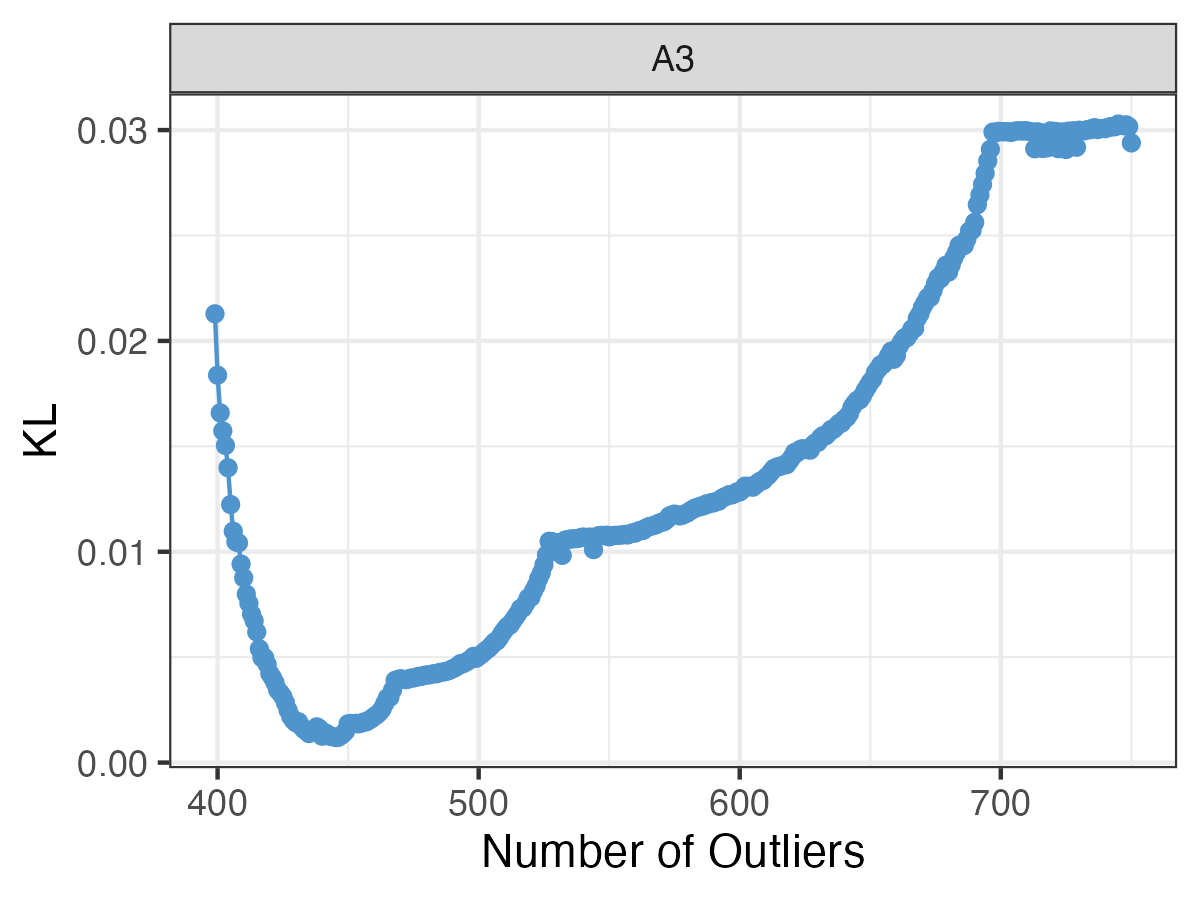}
		\caption{KL plot for the A3 dataset, where the minimum is at 446 outliers.}
		\label{fig:KL}
	\end{figure}
	
	In the initial gross outlier removal for the A3 dataset, 399 outliers are removed. We reach a minimum KL divergence after 48 iterations, but continuing until $F=750$ requires a further 304 iterations. At the minimum KL divergence, we achieve an approximate p-value of 0.099 on Kuiper's test {when $m=100$}. If our significance level was 0.05, we would be able to stop here and substantially save computation time.
	
	\subsection{Crabs Study}\label{sec:crabs}
	Next we evaluate the sensitivity of OCLUST using the  crabs dataset \citep{campbell74}, which is available in the {\tt MASS} package \citep{venables02}. This study closely mimics the study carried out by \cite{peel00} and again by \cite{punzo16b} to demonstrate their respective approaches to dealing with outliers in model-based clustering.
	The dataset contains observations for 100 blue crabs, 50 of which are male, and 50 of which are female. The aim for each classification is to recover the sex of the crab. For this study, we will focus on measurements of rear width (RW) and carapace length (CL). We substitute the CL value of the 25th point to one of eight values in $[-15,20]$. The leftmost plot in Figure~\ref{fig:crabs} plots the crabs dataset by sex, with the permuted value in red taking value $\text{CL}=-5$. We use the OCLUST, mean-shift, CNMix, NCM, and DBSCAN algorithms, along with the thresholding approach. For OCLUST, CNMix and NCM, we run each method, restricting the model to one where the clusters have equal shapes and volumes but varying orientations. We let the \texttt{CNmixt} function determine whether or not the CNMix model is contaminated. Solutions for OCLUST, CNMix, NCM, and the thresholding approach (2T) for the dataset with $\text{CL}=-5$ are  plotted in Figure~\ref{fig:crabs}. Table~\ref{tab:crabs} summarizes the results for each method, listing the number of misclassified points (M) and the predicted number of outliers ($n_O$).  
	\begin{figure}[!htb]
		\centering
		\includegraphics[width=0.9\textwidth]{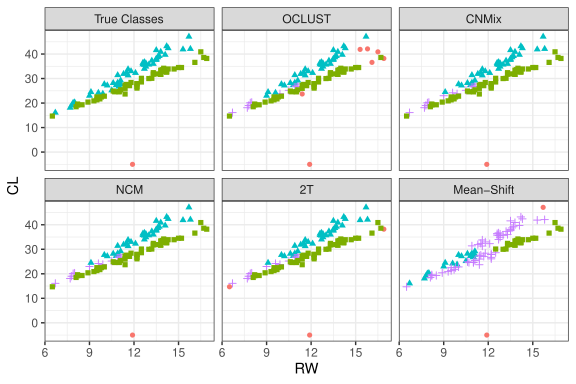}
		\caption{Predicted classifications for OCLUST, CNMix, NCM, mean-shift, and the thresholding approach (2T) when $\text{CL}=-5$. Green squares, blue triangles, purple crosses, and red circles indicate female, male, misclassified, and outlying points, respectively.}
		\label{fig:crabs}
	\end{figure}
	\begin{table}[!htb]
		\caption{Results for running each method on the crabs dataset, where `M' and `$n_O$' designate the number of misclassified points and number of predicted outliers, respectively.} 
		\centering
		\begin{tabular}{@{\extracolsep{\fill}}cccccccccccccccccc}
				\hline
				& \multicolumn{2}{c}{OCLUST} && \multicolumn{2}{c}{Mean-Shift} && \multicolumn{2}{c}{2T}&& \multicolumn{2}{c}{CNMix} && \multicolumn{2}{c}{NCM} && \multicolumn{2}{c}{DBSCAN}\\ 
				\cline{2-3}\cline{5-6}\cline{8-9}\cline{11-12}\cline{14-15}\cline{17-18}
				CL & M & $n_O$  && M & $n_O$  && M & $n_O$ && M & $n_O$&& M & $n_O$&& M & $n_O$\\ 
				\hline
				$-15$ & 11 & 8   && 41 & 3 && 12 & 4  &&  13 & 1 && 13 & 1  &&50& 0 \\ 
				$-10$ & 11 & 6   && 41 & 3  && 13 & 3   && 13 & 1  && 13 & 1  &&50 & 0  \\ 
				$-5$ & 11 & 7   && 41 & 3  && 13 & 3  &&  13 & 1  && 13 & 1  &&50 & 0  \\ 
				0 & 11 & 4  && 41 & 3  && 13 & 2  &&  13 & 1  && 13 & 1 &&50 & 0  \\ 
				5 & 12 & 5   && 41 & 3 && 13 & 3   &&13 & 1  && 13 & 2  &&50 & 0  \\ 
				10 & 11 & 4   && 41 & 3  && 13 & 3   &&  13 & 1  && 11 & 3  &&50 & 0  \\ 
				15 & 11 & 5  && 41 & 3  && 13 & 3   && 13 & 1  && 10 & 4  &&50 & 0  \\ 
				20 & 11 & 5   && 42 & 2  && 14 & 4   &&  13 & 1  && 9 & 5  &&50 & 0  \\				
				\hline 
		\end{tabular} \label{tab:crabs}
	\end{table}
	
	Each method, with the exception of DBSCAN, always identifies the permuted value correctly as an outlier. DBSCAN clusters all the data, including the outlier, into one cluster. NCM classifies more points as outliers as the permuted CL value becomes less extreme. In contrast, OCLUST does the reverse because when the outlier is farther away, it is removed in the gross outlier stage. A gross outlier does not affect the initial clustering, allowing the OCLUST algorithm to remove more points that deviate from multivariate normality.
	
	OCLUST identifies more points as outliers than its competitors and has fewer misclassifications in most instances. It is important to note that the outliers found by OCLUST are not simply the misclassifications --- i.e., outliers classified as non-outliers and \textit{vice versa} --- of NCM. Instead, as seen in Figure~\ref{fig:crabs}, OCLUST identifies three points between the clusters as outliers. This removes the points with high leverage, allowing the clusters to rotate and improve the classification among low values of RW. From one point of view, one may consider that OCLUST removes a few more outliers than necessary; however, their removal improves the parameter estimates. The thresholding approach cannot identify mild outliers because it detects outliers before clustering. Mean shift performs poorly for this dataset due to the shape of the clusters.  
	
	\subsection{Wine Data}
	Finally, we evaluate the results of OCLUST on the wine dataset, available in the {\tt gclus} package \citep{hurley19}. This dataset describes 13 attributes of 178 Italian wines (e.g., alcohol, hue, malic acid). Each wine originates from one of three cultivars: Barolo, Barbera, or Grignolino. For this analysis, we add 12 points of noise, uniformly distributed in each dimension as described in Section~\ref{sec:sims}. We perform unsupervised classification on this dataset with the aim of removing the noise and recovering the cultivar for each wine. We run OCLUST as well as the five comparator methods from Section~\ref{sec:crabs}. The results for the comparators, except DBSCAN, are provided in Table~\ref{tab:winecomp}. DBSCAN performs poorly by classifying all points into one cluster and only identifying seven points of added noise.
	\begin{table}[!htb]
		\centering
		\caption{Classification results for CNMix, NCM, 2T, mean-shift and OCLUST on the wine dataset. Classification results on the wine dataset for OCLUST at global minimum KL and when stopped when the p-value exceeds 0.05.}\label{tab:winecomp}
		\begin{tabular}{@{\extracolsep{\fill}}llccccccccccccc}
				\hline 
				&   \multicolumn{4}{c}{CNMix}&&\multicolumn{4}{c}{NCM}&&\multicolumn{4}{c}{2T}\\ 
				\cline{2-5}\cline{7-10}\cline{12-15}
				{Cultivar}&1&2&3&bad&&1&2&3&bad&&1&2&3&bad\\
				\hline 
				Barbera & 47 && &1 &&47& &  &1&&48&&&\\ 
				Barolo &  &59&& &&&58 &1&&& &59&&\\ 
				Grignolino& 4&&58  & 9&& 3&2  & 65 &1&&5&1&65&\\ 
				Noise& &&  &12 & &  &&  &12&&&&&12\\ 
				\hline
				&   \multicolumn{4}{c}{Mean-Shift} &&\multicolumn{4}{c}{OCLUST Min. KL}&&\multicolumn{4}{c}{OCLUST p-val>0.05}\\ 
				\cline{2-5}\cline{7-10}\cline{12-15}
				{Cultivar}&1&2&3&bad&&1&2&3&bad&&1&2&&bad\\
				\hline 
				Barbera &29&&19&&&33& &  &15  && 44&  && 4\\ 
				Barolo &13&44&&2&& & 43& &16 &&  & 58 && 1\\ 
				Grignolino&19&1&50&1&& && 38 &33 && 2& &51&18  \\ 
				Noise&1&&&11&& &&  & 12 && & &&12  \\ 
				\hline
		\end{tabular}
	\end{table}
	
	NCM achieves good results, with only six misclassifications, five of which are from the Grignolino cultivar. It identifies all 12 points of added noise as outliers. CNMix identifies all 12 outliers and has fewer misclassifications than NCM, all four of which are from the Grignolino cultivar. The thresholding approach performs very well, likely due to the uniform simulation of noise in 13 dimensions creating gross outliers. Mean-shift performs poorly. 
	
	The solution for OCLUST, with the minimum KL divergence, is shown in Table~\ref{tab:winecomp}. We have perfect classification among the good points, but 76 points are treated as outliers, which represents 40\% of the dataset. Our KL divergence graph in Figure~\ref{fig:KLwine} has a local minimum at around 35 outliers, {which could be viewed as good alternative solution with fewer outliers.} Re-running the OCLUST algorithm with our p-value stopping criterion results in it stopping at 35 outliers with p-value 0.059. The classification results with 35 outliers are shown in Table~\ref{tab:winecomp}. This allows for the preservation of more original data with minimal decrease in classification accuracy. This alternative solution has fewer misclassifications than its comparators.
	
	The associated KL divergence plot (Figure~\ref{fig:KLwine}) displays a pattern unlike Figure~\ref{fig:KL} because the data are real and not perfectly multivariate normal. We tested each cluster for multivariate normality when all 76 outliers are removed and at the intermediate solution with 35 outliers. We tested the null hypothesis of multivariate normality using the energy test in the \texttt{energy} package \citep{szekely13,szekely22}. When 35 outliers are removed, the Barolo, Barbera, and Grignolino clusters have p-values of 0.09, 0.27, and 0.09, respectively. At the minimum KL solution, i.e., with 76 points removed, the Barolo, Barbera, and Grignolino clusters have p-values of 0.77, 0.35, and 0.59, respectively. Evidently, OCLUST continues to remove points until the clusters are unmistakably multivariate normal. 
	\begin{figure*}[!ht]
		\centering
		\begin{subfigure}[t]{0.5\textwidth}
			\centering
			\includegraphics[width=\textwidth]{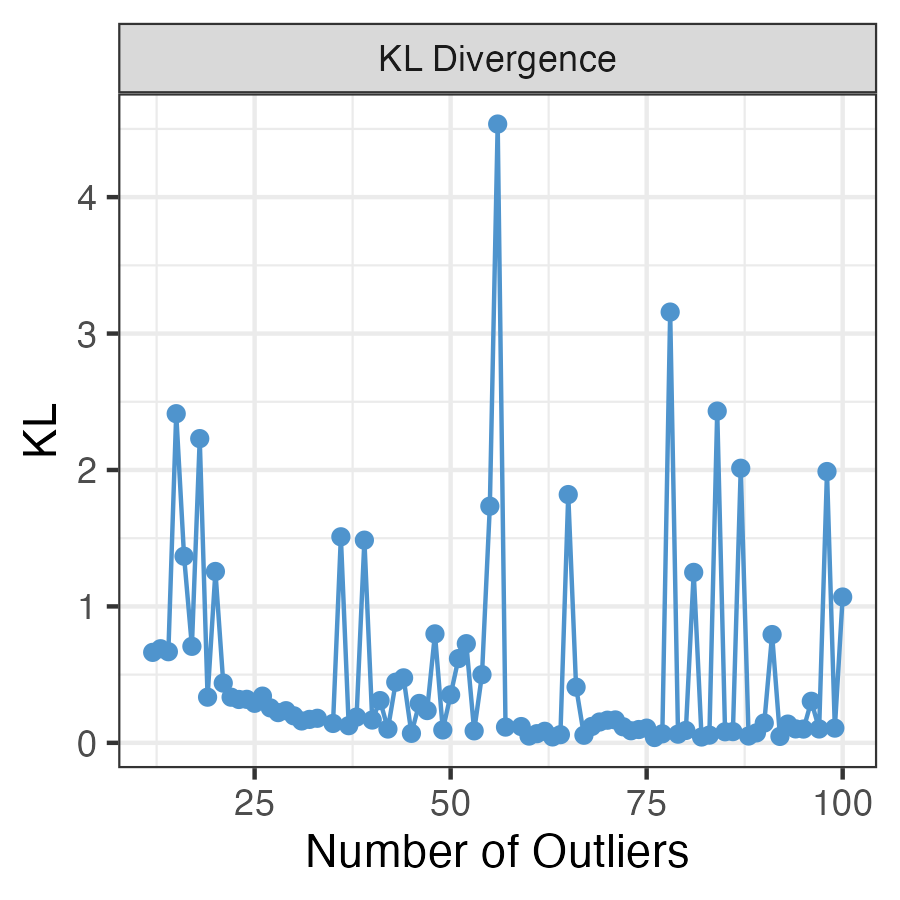}
			\caption{KL plot for the wine dataset. The algorithm is run to an upper bound of 100 outliers.}
			\label{fig:KLwine}
		\end{subfigure}%
		~ 
		\begin{subfigure}[t]{0.5\textwidth}
			\centering
			\includegraphics[width=\textwidth]{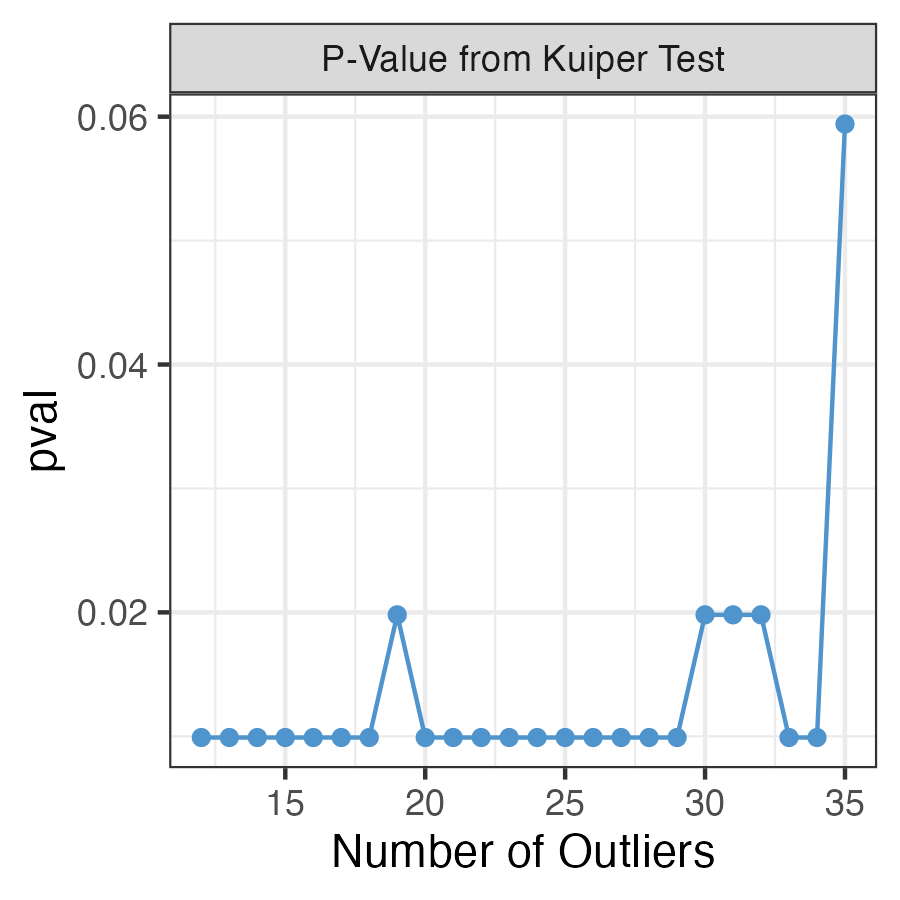}
			\caption{P-value plot for the wine dataset. The algorithm is stopped at 35 outliers because the p-value exceeds 0.05}
			\label{fig:pvalwine}
		\end{subfigure}
		\caption{KL and p-value plots, respectively, for the wine dataset.}
		\label{fig:klpval9}
	\end{figure*}
	
	Two-dimensional projections of both OCLUST solutions are shown in Figure~\ref{fig:wineclass}. In the left-hand figure, where only 35 points are removed, we see some of the identified outliers are far from the clusters but others are mild outliers. In the right-hand plot, we see the solution when more points are removed. Although they may have been correctly classified, they deviate from the symmetric elliptic shape of the multivariate Gaussian distribution. The competing Gaussian model-based classification algorithms, i.e., CNMix and NCM, are able to correctly cluster these points but they seem not to fit well within the model. {Identifying these additional outliers may be useful in novelty detection. These wines present inconsistent features which may be relevant to quality control. The spurious points do not fit into the model and indicate wines with chemical and physical properties that differ from the rest.}
	\begin{figure*}[!ht]
		\centering
		\begin{subfigure}[t]{3in}
			\centering
			\includegraphics[width=3in]{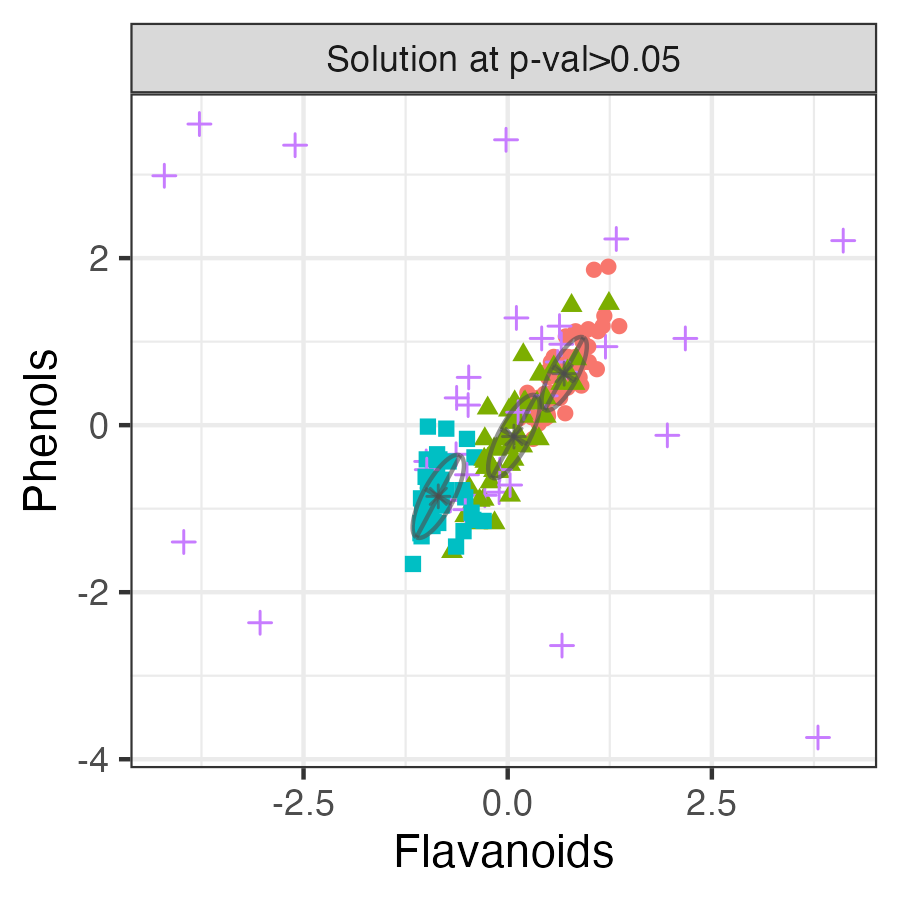}
			\caption{OCLUST classification with 35 identified outliers.}
		\end{subfigure}%
		~ 
		\begin{subfigure}[t]{3in}
			\centering
			\includegraphics[width=3in]{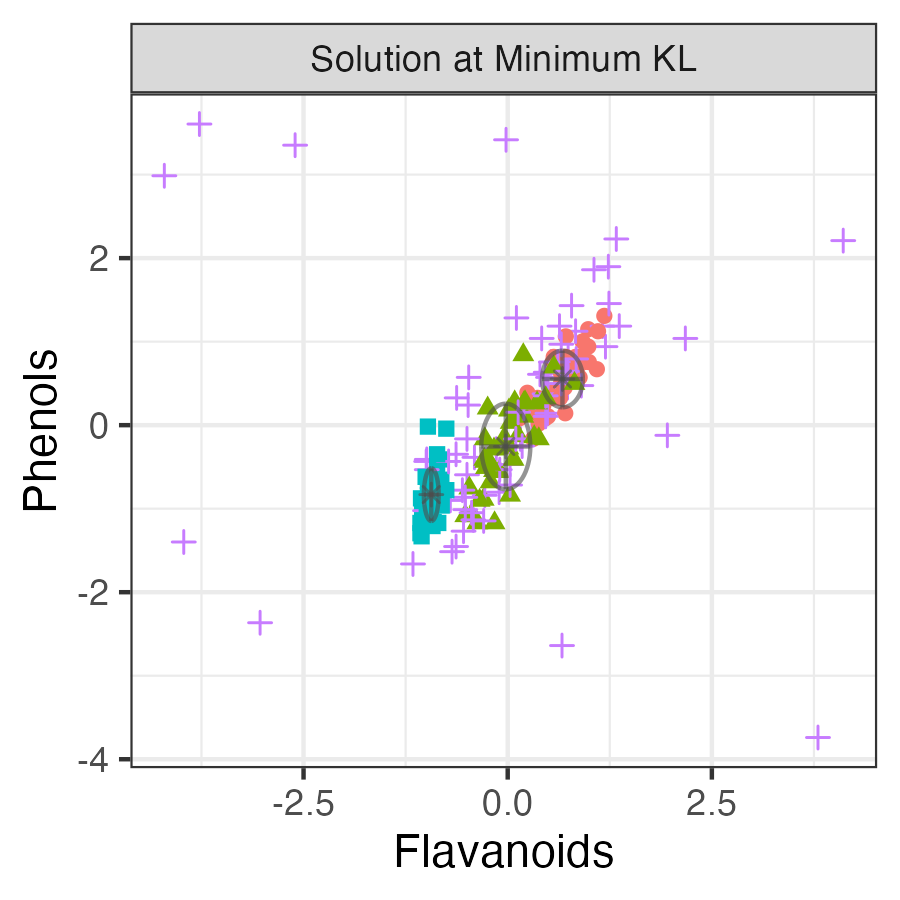}
			\caption{OCLUST classification with 76 identified outliers.}
		\end{subfigure}
		\caption{Predicted classifications by the OCLUST algorithm for the wine dataset with 12 uniform outliers simulated. Scaled data are plotted in two dimensions, showing measurements for flavanoids and phenols. Purple crosses indicate the predicted outliers.}\label{fig:wineclass}
	\end{figure*}
	
	\section{Discussion}
	It was shown that, for data from a finite Gaussian mixture model, the log-likelihoods of the subset models are approximately distributed according to a mixture of beta distributions. This result was used to identify outliers, without needing specification of the proportion of outliers, by removing outlying points until the subset log-likelihoods followed this derived distribution. The result is the OCLUST algorithm, which trims outliers from a dataset while clustering using Gaussian mixture models. In simulations, OCLUST performs similarly or better than mean-shift outlier detection for most benchmark datasets except where there is extreme cluster overlap, i.e., S3 and S4. OCLUST performs similarly to NCM when the clusters are elliptical in the real datasets. For the crabs data, OCLUST trims mild outliers with high leverage, which improves the classification for small values of CL. For the wine dataset, OCLUST obtains good classification with fewer iterations. However, if the algorithm is allowed to continue, it removes points which deviate from the assumption of multivariate normality, arguably removing too many points in the process.
	
	Although this work used the distribution of the log-likelihoods of the subset models to test for the presence of outliers, the derived distribution may be used to verify other underlying model assumptions, such as whether the clusters are Gaussian. Note that the OCLUST algorithm could be used with other clustering methods and should be effective so long as is it reasonable to assume that the underlying distribution of clusters is Gaussian. Of course, one could extend this work by deriving the distribution of subset log-likelihoods for mixture models with non-Gaussian components. This would allow direct consideration of asymmetric clusters with outliers. Finally, this model is limited by dimensionality because, without sufficient observations, the underlying mixture model becomes over parametrized. One could extend this approach to high-dimensional data by using an analogue of the mixture of factor analyzers model or its extensions \citep[see, e.g.,][]{ghahramani97,mcnicholas08,mcnicholas10d}. Based on the comparisons conducted herein, one might expect the resulting method to perform favourably, or at least comparably, when compared to the approaches used by \cite{wei12} and \cite{punzo20}.
	
	\subsection*{Acknowledgments}
This work was supported by an NSERC Undergraduate Research Award, an NSERC Canada Graduate Scholarship, the Canada Research Chairs program, an E.W.R. Steacie Memorial Fellowship, and a Dorothy Killam Fellowship.


\appendix
\section{Relaxing Assumptions}\label{app:overlap}
Lemma~\ref{lem:loglik} assumes that the clusters are well separated and non-overlapping to simplify the model density to the component density. This appendix, however, serves to show that this assumption may be relaxed in practice. Following \cite{qiu06}, we can quantify the separation between clusters using the separation index $J^*$. In the univariate case,
\begin{equation*}\label{key}
	J^*=\frac{L_2(\alpha/2)-U_1(\alpha/2)}{U_2(\alpha/2)-L_1(\alpha/2)}, 
\end{equation*}
where $L_i(\alpha/2)$ is the sample lower $\alpha/2$ quantile and $U_i(\alpha/2)$ is the sample upper $\alpha/2$ quantile of cluster $i$, and cluster 1 has lower mean than cluster~2. In the multivariate case, the separation index is calculated along the projected direction of maximum separation. Clusters with $J^*>0$ are separated, clusters with $J^*<0$ overlap, and clusters with $J^*=0$ are touching.

To measure the effect of separation index $J^*$ on the complete-data log-likelihood $l_\mathcal{X}$, 100 random datasets with $n=1800$ for each combination of settings used were generated using the {\tt clusterGeneration} \citep{qiu15} package in {\sf R}. The settings used were: three clusters with equal cluster proportions, dimensions $p \in \{2,4,6\}$, and separation indices in $[-0.9,0.9]$.  Covariance matrices were generated using random eigenvalues $\lambda\in(0,10]$. 
The parameters were estimated using the EM algorithm. The log-likelihood and the complete-data log-likelihood were calculated using the parameter estimates. The average of the quantity $(l_\mathcal{X}-\ell_\mathcal{X})/\ell_\mathcal{X}$ is computed for each combination of settings --- recall that there are 100 datasets for each setting. The result is the average proportional change in log-likelihood over the 100 datasets between the log-likelihood $\ell_\mathcal{X}$ and the complete-data log-likelihood $l_\mathcal{X}$ (Figure~\ref{fig:sepvslik}).
\begin{figure}[!htb]
	\centering
	\includegraphics[width=3.5in]{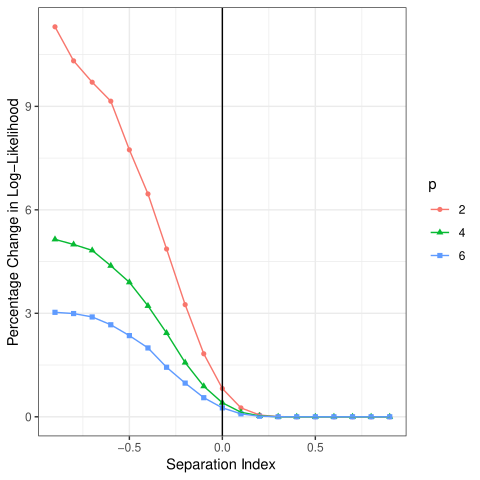}
	\caption{The effect of cluster separation on the complete-data log-likelihood of the model, where the vertical line represents the threshold between separated and overlapping clusters.}
	\label{fig:sepvslik}
\end{figure}

As one would expect, the approximation of $\ell_\mathcal{X}$ by $l_\mathcal{X}$ improves as the separation index increases (see Lemma~\ref{lem:loglik}). 
However, the difference is negligible for touching and separated clusters ($J^*\geq0$).

\section{Mathematical Results}\label{app:proofs}
\subsection{Proof of Lemma~\ref{lem:loglik}}
\begin{proof}
	Suppose $\vecSigma$ is positive definite. Then, $\vecSigma^{-1}$ is also positive definite and there exists a matrix $\vecQ$ such that $\vecQ'\vecQ=\mathbf{I}$ and $\vecSigma^{-1}=\vecQ'\vecLambda\vecQ$, where $\vecLambda$ is diagonal with $\vecLambda_{ii}=\lambda_i>0, i\in[1,p]$. Let  $\vecx-\vecmu
	=\vecQ'\vecw$, where $\vecw\neq \mathbf{0}$. Now,	
	\begin{equation*}
		\begin{split}
			(\vecx-\vecmu)'\vecSigma^{-1}(\vecx-\vecmu)&
			=\vecw'\vecQ\vecSigma^{-1}\vecQ'\vecw
			=\vecw'\vecLambda\vecw
			=\sum_{i=1}^p \lambda_i w_i^2\\
			&\geq \inf_i(\lambda_i) \sum_{i=1}^p  w_i^2
			=\inf_i(\lambda_i) \|\vecw\|^2
			=\inf_i(\lambda_i) \|\vecx-\vecmu\|^2
		\end{split}
	\end{equation*}
	because $\|\vecx-\vecmu\|^2
	=\|\vecQ'\vecw\|^2
	=\vecw'\vecQ\vecQ'\vecw
	=\|\vecw\|^2$.
	Thus, as $\|\vecx-\vecmu\|\rightarrow \infty$, $$(\vecx-\vecmu)'\vecSigma^{-1}(\vecx-\vecmu) \rightarrow \infty$$ and
	\begin{equation*}
		\phi(\vecx\mid\vecmu,\vecSigma)=\frac{1}{\sqrt{(2\pi)^p|\vecSigma|}}\text{exp}\left\{-\frac{1}{2}(\vecx-\vecmu)'\vecSigma^{-1}(\vecx-\vecmu)\right\} \rightarrow 0.
	\end{equation*}
	Suppose $z_{ih}=1$. Then, as the clusters separate, $ \|\vecx_i-\vecmu_g\| \rightarrow \infty$ and $\phi(\vecx_i\mid \vecmu_g,\vecSigma_g) \rightarrow 0$ for $g\neq h$. Thus, for $\vecx_i$,
	\begin{equation*}
		\begin{split}
			\sum_{g=1}^G \pi_g \phi(\vecx_i\mid \vecmu_g, \vecSigma_g) &= \sum_{g\neq h}  \pi_g \phi(\vecx_i\mid \vecmu_g, \vecSigma_g)
			+ \pi_h \phi(\vecx_i\mid \vecmu_h, \vecSigma_h) \rightarrow \pi_h \phi(\vecx_i\mid \vecmu_h, \vecSigma_h)
		\end{split}
	\end{equation*}
as $ \|\vecx_i-\vecmu_g\| \rightarrow \infty, g\neq h$. Therefore,
	\begin{equation*}
		\begin{split}
			\ell_{\mathcal{X}}&=\sum_{i=1}^n \log \left[\sum_{g=1}^G \pi_g \phi(\vecx_i\mid \vecmu_g, \vecSigma_g)\right]
			\rightarrow \sum_{i=1}^n\sum_{g=1}^G  z_{ig} \log \left[\pi_g \phi(\vecx_i\mid \vecmu_g, \vecSigma_g)\right]=l_\mathcal{X}
		\end{split}
	\end{equation*}
as the clusters separate. 
\end{proof}
\begin{rmk}
	{Although covariance matrices need only be positive semi-definite, we restrict $\vecSigma$ to be positive definite in the proof of Lemma~\ref{lem:loglik} so that $\vecX$ is not degenerate.} 
\end{rmk}

\subsection{Proof of Parameter Estimate Convergence}\label{app:estconv}
{Sample parameter estimates for the subset models converge to the parameter estimates for the full model. That is, as $n_g\rightarrow\infty$:}
\begin{equation*}
	\begin{split}
		\hat{\pi}_{g\setminus j}\rightarrow \hat{\pi}_g,\\
		\bar{\vecx}_{g\setminus j}\rightarrow \bar{\vecx}_g,\\
		\sampcov_{g\setminus j}\rightarrow\sampcov_g,
	\end{split}
\end{equation*}
where $\hat{\pi}_{g}$, $\bar{\vecx}_g$, and $\sampcov_g$ are the sample proportion, mean, and covariance, respectively, for the $g${th} cluster {considering all observations in the entire dataset $\mathcal{X}$,} and $\hat{\pi}_{g\setminus j}$, $\bar{\vecx}_{g \setminus j}$, and $\sampcov_{g\setminus j}$ are the sample proportion, mean, and covariance, respectively, for the $g${th} cluster considering only observations in the $j${th} subset $\mathcal{X}\setminus \vecx_j$.

\begin{proof}
	If $z_{jh}=1$, then 
	\begin{equation*}
		\hat{\pi}_{h \setminus j}=\frac{n_h-1}{n-1}.\\
	\end{equation*}
	If $z_{jh}=0$,
\begin{equation*}
	\hat{\pi}_{h \setminus j}=\frac{n_h}{n-1}.\\
\end{equation*}
	Thus, $$\hat{\pi}_{h \setminus j}\rightarrow \frac{n_h}{n}=\hat{\pi}_{h}$$ as $n_h\rightarrow \infty$.
	Now, for $\bar{\vecx}_{h \setminus j}$ and $\sampcov_{h \setminus j}$, if $z_{jh}=0$, then $\bar{\vecx}_{h \setminus j}=\bar{\vecx}_{h}$ and $\sampcov_{h\setminus j}=\sampcov_{h}$ because
	\begin{equation*}
n_{h \setminus j}=n_h, \qquad \bar{\vecx}_h=\frac{1}{n_h}\sum_{i=1}^n z_{ih}\vecx_i= \frac{1}{n_h}\sum_{i\neq j} z_{ih}\vecx_i =\bar{\vecx}_{h \setminus j},
	\end{equation*}
\begin{equation*}
\sampcov_h=\frac{1}{n_h-1}\sum_{i=1}^n z_{ih}(\vecx_i-\bar{\vecx}_h)(\vecx_i-\bar{\vecx}_h)'=\frac{1}{n_h-1}\sum_{i\neq j} z_{ih}(\vecx_i-\bar{\vecx}_h)(\vecx_i-\bar{\vecx}_h)'=\sampcov_{h\setminus j} .
\end{equation*} \\
	If $z_{jh}=1$,
	\begin{equation*}
		\bar{\vecx}_{h \setminus j}=\frac{n_h\bar{\vecx}_h-\vecx_j}{n_h-1}.\\
	\end{equation*}
	Thus, $\bar{\vecx}_{h \setminus j}\rightarrow \bar{\vecx}_h$ as $n_h\rightarrow \infty$.
		Therefore, $\bar{\vecx}_{h \setminus j}\approx\bar{\vecx}_h$ and
		\begin{equation*}
			\sampcov_{h \setminus j}\approx\frac{1}{n_h-2}\left[(n_h-1)\sampcov_h-(\vecx_j-\bar{\vecx}_h)(\vecx_j-\bar{\vecx}_h)'\right].
		\end{equation*}
		Thus, $\sampcov_{h\setminus j} \rightarrow \sampcov_h$ as $n_h\rightarrow \infty$.
	\end{proof}
\section{Computation Time}\label{app:timing}
Table~\ref{tab:time} records the computation time in seconds for each algorithm. The entire OCLUST algorithm is timed without the p-value stopping criterion. The mean-shift algorithm is run in Python with 5000 random swaps. All other algorithms are run in \textsf{R}. Competing algorithms are run in series on a cluster with Intel Xeon Silver 4114 CPUs. The OCLUST algorithm was run on that same cluster for all datasets except Dim032. The benchmark and crabs datasets were run in parallel with 80 cores and the wine dataset was run with 20 cores. Dim032 was run with 30 cores on a separate cluster with Intel Xeon E5-4627 v2 CPUs.  
\begin{table}[!htp]
	\caption{Computation time in seconds for each outlier identification algorithm on all datasets.}\label{tab:time}
	\resizebox{\textwidth}{!}{\begin{tabular}{lcccccc}  
			\hline 
			Dataset & (Full) OCLUST & Mean-Shift & 2T & CNMix & NCM & DBSCAN \\   
			\hline
			A1 & 5.43E+03 & 2.98E+04 & 2.38E+02 & 2.77E+02 &   6.14E+00& 2.92E-01\\ 
			A2 & 3.59E+04 & 6.42E+04 & 7.29E+02 & 1.44E+02 &   9.86E+00& 2.46E-01\\ 
			A3 & 1.21E+05 &  2.01E+05& 2.11E+03 & 5.16E+03 &   3.71E+01 &4.70E-01\\ 
			S1& 1.03E+04 & 4.28E+04 & 1.15E+02 & 1.84E+02 &   6.76E+00&2.69E-01 \\ 
			S2 & 1.75E+04 & 4.06E+04 & 2.14E+02 & 3.06E+02 &   1.03E+01 &2.60E-01\\ 
			S3 & 6.10E+04 & 3.13E+04 & 2.86E+02&  3.50E+02&  1.96E+01& 3.07E-01\\ 
			S4 & 1.27E+05 & 2.47E+04 & 3.60E+02 & 4.65E+02 &  2.19E+01& 2.22E-01\\ 
			Unbalance & 9.33E+05 &1.96E+04 & 3.80E+02 &1.26E+02 &  5.53E+00&5.67E-01 \\ 
			Dim032 & 7.35E+03 & 8.566E+03 & 2.28E+02 & 1.55E+03 &  7.71E+01&3.05E-01 \\
			Wine & 3.95E+02 & 1.80E+02 &2.38E+02 &  5.62E+00&  2.69E-01& 7.09E-02\\  
			Crabs (avg.) & 3.65E+00 & 1.06E+02 & 2.93E-02 & 9.02E-01 &  9.66E-02& 3.66E-02\\ 
			\hline
	\end{tabular}}
\end{table}

\end{document}